\definecolor{mycolor}{rgb}{0, 0, 0}
\newcommand{\OPT}{\mathrm{OPT}}
\newcommand{\boundary}[1]{\partial #1}
\newcommand{\angseq}[1]{\text{\em A}(#1)}
\title{Contiguous Boundary Guarding}
\author{
\and Ahmad Biniaz\thanks{School of Computer Science, University of Windsor, \texttt{abiniaz@uwindsor.ca}. Research supported in part by NSERC.}
\and Anil Maheshwari\thanks{School of Computer Science, Carleton University, \texttt{anil@scs.carleton.ca}. Research supported in part by NSERC.} 
\and Joseph S. B. Mitchell\thanks{Department of Applied Mathematics and Statistics, 
Stony Brook University, \texttt{joseph.mitchell@stonybrook.edu}. Supported by the National Science Foundation (CCF-2007275).} 
\and Saeed Odak\thanks{School of Electrical Engineering and Computer Science, University of Ottawa, \texttt{saeedodak@gmail.com}. Research supported by NSERC.} 
\and Valentin Polishchuk\thanks{Communications and Transport Systems, Link\"{o}ping Univeristy, \texttt{valentin.polishchuk@liu.se}. Partially supported by the Swedish Transport Administration and the Swedish
Research Council.} 
\and Thomas Shermer\thanks{School of Computing Science, Simon Fraser University, \texttt{shermer@sfu.ca}. Research supported by NSERC.} }
\newif\ifacr\acrtrue
\newtheorem{lemma}{Lemma}
\newtheorem{theorem}{Theorem}
\newtheorem{observation}{Observation}
\newtheorem{corollary}{Corollary}
\begin{document}
\maketitle\begin{abstract}
We study the problem of guarding the boundary of a simple polygon with a minimum number of guards such that each guard covers a contiguous portion of the boundary. First, we present a simple greedy algorithm for this problem that returns a guard set of size at most $\OPT + 1$, where $\OPT$ is the number of guards in an optimal solution. Then, we present a polynomial-time exact algorithm. While the algorithm is not complicated, its correctness proof is rather involved. This result is interesting in the sense that guarding problems are typically NP-hard and, in particular, it is NP-hard to minimize the number of guards to see the boundary of a simple polygon, without the contiguous boundary guarding constraint.

From the combinatorial point of view, we show that any $n$-vertex polygon can be guarded by at most {\color{mycolor}$\lfloor \frac{n-2}{2}\rfloor$} guards. This bound is tight because there are polygons that require this many guards.
\end{abstract}

\section{Introduction}
The {\em art gallery problem}, introduced in 1973 by Victor Klee \cite{chvatal1975combinatorial,ORourke1987}, asks for a minimum number of guards that see every point of a given polygon $P$---the guards can lie anywhere in the polygon.\footnote{It is also known as the {\em point-guard} art gallery problem.} This problem is central to computational geometry and is still an active research area. It has been long known NP-hard \cite{Lee1986}; see also \cite{Aggarwal1984, ORourke1987,o1983some}. The recent breakthrough result by Abrahamsen, Adamaszek, and Miltzow \cite{Abrahamsen2022} shows that the problem is $\exists\mathbb{R}$-complete\footnote{$\exists\mathbb{R}$ consists of all problems reducible to the decision problem for the existential theory of the reals.} even if the corners of the polygon are at integer coordinates. The problem is notoriously difficult as the best known approximation algorithms have logarithmic factors \cite{bonnet2017approximation,Deshpande2007,Efrat2006}; see also \cite{Eidenbenz2001} for some lower bounds on the approximation factor.

The {\em boundary guarding} is a variant of the art gallery problem in which the goal is to guard only the boundary of $P$ with a minimum number of guards that can lie anywhere in $P$. One may think of this as guarding only the walls of an art gallery with no sculptures. By extending ideas of \cite{Abrahamsen2022}, Stade has shown that this variant is also $\exists\mathbb{R}$-complete \cite{stade2022point}. We introduce the {\em contiguous} version of this problem. In the {\em contiguous boundary guarding} problem, the goal is to guard the boundary of $P$ such that each guard covers a contiguous portion of the boundary. While the visibility polygon of a guard may have several connected components when intersected with the boundary, the guard is assigned to only one  component. 

Contiguous boundary guarding is natural, particularly if the guards have a bounded field of view or cannot rotate to cover all visible components on the boundary. This problem appeared in Open Problems from CCCG 2024 \cite{CCCG2024}, motivated by the fact that the hardness proofs for typical art gallery variants \cite{Abrahamsen2022,stade2022point} require guards to cover several connected components on the boundary.
In this paper, we present an exact polynomial-time algorithm and tight combinatorial bounds for the contiguous boundary guarding problem. 

\subsection{Our results}
We study the contiguous 
boundary guarding problem from combinatorial and computational points of view. In Section~\ref{bound-section} we give a constructive proof that for every integer $n\ge 4$, every $n$-vertex polygon can be guarded by at most {\color{mycolor}$\lfloor \frac{n-2}{2}\rfloor$ guards.
This bound is the best possible as some polygons require this many guards. 
}

In Section~\ref{greedy-section} we present a simple greedy algorithm that returns a guard set of size at most $\OPT+1$, where $\OPT$ is the number of guards in an optimal solution.\footnote{A similar greedy algorithm has been used in \cite{Abrahamsen2025} for some covering problems.} The algorithm starts from an arbitrary point $p$ on the boundary and then covers the boundary in a counter-clockwise (ccw) direction from $p$. It places a guard to cover $p$ and a maximal contiguous portion of the boundary in the ccw direction. The algorithm then continues greedily, covering a maximal contiguous boundary portion with each choice of guard, until returning to $p$.  
Despite getting very close to optimal by this greedy algorithm, achieving an optimal solution (avoiding an extra guard) is challenging. An interesting property of the greedy algorithm is that if $p$ is covered by two guards in some optimal solution, then the greedy algorithm returns a guard set of size $\OPT$.

In Section~\ref{optimal-section} we present a polynomial-time exact algorithm for the problem. This algorithm is not complicated, though its correctness proof (presented in Section~\ref{proof-section}) involves some nontrivial arguments. The main idea is first to find a polynomial-size set $S$ of points on the boundary such that at least one of them is covered by two guards in some optimal solution. Then, we find an optimal solution by running the greedy algorithm from every point in $S$. Finding such a set $S$ is the most technical part of the algorithm and the proof.


\subsection{Related work and results}
The art gallery problem and its variants have been extensively studied. Most of the variants (that arise from enforcing constraints, say, on the input polygon, on the location of the guards, or on the portion of the polygon to be guarded) remain NP-hard.

\subparagraph*{Hardness and approximations}
The {\em vertex-guard problem} is a discrete version of the art gallery problem  where the guards must be located at the vertices of the polygon. This version is also NP-hard \cite{Lee1986,o1983some} with known logarithmic factor \cite{Ghosh10} and sub-logarithmic factor \cite{King2011bd} approximation algorithms. A similar sub-logarithmic  approximation is known for the case where the guards can lie anywhere on the boundary of the polygon \cite{King2011bd}. Better approximation algorithms are known for monotone polygons \cite{krohn2013approximate} and weakly-visible polygons \cite{ashur2022terrain,bhattacharya2017approximability}.

Terrain guarding is a related problem in which we are given a 1.5-dimensional terrain (i.e., an $x$-monotone polygonal chain), and the goal is to find a minimum number of guards on the terrain that guards the terrain---this is a variant of the boundary guarding problem. This problem is also NP-hard \cite{Chen2095,King2011terrain}, and admits polynomial time approximation scheme~\cite{ashur2022terrain,Gibson2014}.

Both versions of the problem (point-guard and vertex-guard) remain NP-hard in orthogonal polygons\footnote{Also known as rectilinear polygons.} \cite{Schuchardt1995}. Even if we want to guard only the vertices of an orthogonal polygon, the problem is still NP-hard \cite{Katz2008} for three versions where the guards lie on vertices, on the boundary, or anywhere in the polygon.

\subparagraph*{Polynomial-time exact algorithms}
For some trivial instances, the art gallery problem can be solved in polynomial time, such as for convex, star-shaped, and spiral polygons. Finding non-trivial instances that admit polynomial-time algorithms is of particular interest.
To the best of our knowledge, there are only a few non-trivial instances for which the problem can be solved optimally in polynomial time. These instances usually enforce constraints on the polygon itself or on the definition of visibility. For example, the authors in~\cite{daescu2019altitude} present polynomial-time algorithms when the polygon is uni-monotone (i.e.~it is monotone and one of
    its two monotone polygonal chains is a straight line segment)\footnote{This type of polygons are also referred to as monotone mountains.} and for 1.5-dimensional terrains where all the guards must be placed on a horizontal line above the terrain.

The art gallery problem can be solved in polynomial time on orthogonal polygons but under {\em rectilinear} visibility \cite{worman2007polygon} and {\em staircase} visibility \cite{motwani1988covering}. In the former case, two points are visible if their minimum orthogonal bounding box lies in the polygon, and in the latter case, two points are visible if there is an $x$- and $y$-monotone path between them in the polygon.

All the above algorithms \cite{daescu2019altitude,motwani1988covering,worman2007polygon} are obtained by using the notion of {\em perfectness} of $P$, which means that the minimum number of guards is equal to the maximum number of points that can be placed in $P$ such that their visibility polygons are pairwise disjoint. This property, however, does not hold for our problem.

\subparagraph*{Combinatorial bounds}
While most variants of the art gallery problem are hard with high computational complexities, tight combinatorial bounds are known on the numbers of guards that are always sufficient and sometimes necessary to guard a polygon with $n$ vertices.
For the original art gallery problem the bound is $\lfloor n/3\rfloor$ for both point-guards and vertex-guards~\cite{chvatal1975combinatorial,fisk1978short}, and it is $\lfloor n/4\rfloor$ for orthogonal polygons   \cite{kahn1983traditional,lubiw1985decomposing,sack1988guard,urrutia2000art}. Tight bounds are also known for variants where an entire edge or diagonal can serve as a guard \cite{Biniaz2024,ORourke1983,shermer1992recent}.

\subparagraph*{Contiguity} The contiguous guarding is in particular related to guarding polygons with cameras/guards that have a bounded field of view (i.e.~the maximum angle a camera can cover is limited)~\cite{toth2000art,toth2002art} and to the so-called {\em floodlight} \cite{ORourkeSS95,Toth03} and  {\em city guarding} problems~\cite{Bao2008,Biniaz2023,Daescu2021}. It is also related to contiguous guarding of points on a 1.5-dimensional terrain by placing at most $k$ watchtowers such that each watchtower sees a contiguous sequence of points~\cite{kang4850503guarding}.

\subsection{Preliminaries}

Let $P$ be a simple polygon, i.e. a polygon with
    no holes and no self-intersections. We denote the boundary of $P$ by $\boundary{P}$. We define a {\em chain} as a connected portion of $\partial P$. A {\em reflex vertex} of $P$ is a vertex with an interior angle greater than $\pi$. A vertex of $P$ with an interior angle less than or equal to $\pi$ is called a {\em convex vertex}.

A {\em triangulation} of $P$ is a partition of the interior of $P$ into a set of triangles whose corners are at vertices of $P$. It is well-known that the dual of any triangulation of an $n$-vertex polygon is a binary tree with $n-2$ nodes. The {\em diameter} of a tree is the number of edges of a longest path in the tree.

\section{Combinatorial Bounds}
\label{bound-section}
In this section we determine the number of guards that are always sufficient and sometimes necessary to guard an $n$-vertex polygon.

It is easily seen that a chain with $n$ edges can be covered by {\color{mycolor}$\lfloor\frac{n+1}{2}\rfloor$} guards
---this can be achieved by placing a guard at every second vertex. 
Thus it follows that a polygon with $n$ vertices (and hence $n$ edges) can be covered by  {\color{mycolor}$\lfloor\frac{n+1}{2}\rfloor$} guards.
As we will see later
, this bound is very close to the best achievable bound but is not tight. We need stronger ingredients, such as the following lemma, to obtain a tight bound. In this section we say that a vertex ``covers'' an edge if the vertex sees both endpoints of the edge.

\begin{lemma}
\label{cover-lemma}
    For any polygon $P$ with at least $8$ vertices, one of the following statements~holds.
    
    \begin{enumerate}
        \item There is a vertex that covers $6$ consecutive edges of $\partial P$.
        \item There are two  vertices that collectively cover $7$ consecutive edges of $\partial P$.
        \item There are two  vertices that collectively cover $8$ edges of $\partial P$ such that each vertex covers $4$ consecutive edges.
    \end{enumerate}
\end{lemma}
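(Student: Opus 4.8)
The plan is to fix a triangulation $T$ of $P$ and argue entirely in terms of its fans. Recall that the dual of $T$ is a binary tree with $n-2\ge 6$ nodes. The first step is to translate ``covering consecutive edges'' into triangulation language: the triangles of $T$ incident to a vertex $v$ form a fan, and $v$ sees in their entirety its two incident boundary edges together with every boundary edge appearing opposite $v$ in one of these triangles (that is, every edge whose \emph{apex} is $v$). In particular, if $v$ is the apex of $k$ consecutive boundary edges then $v$ covers those $k$ edges; and more generally the maximal contiguous chain of edges that $v$ covers and that contains $v$ has length $2+f+b$, where $f$ and $b$ count the consecutive apex-$v$ edges immediately following and preceding $v$'s two incident edges. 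Verifying that each such combinatorial coverage is genuine visibility inside $P$ is immediate, since the relevant triangles all lie in $P$.

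Using this, I would partition $\partial P$ into maximal runs of consecutive edges sharing a common apex; a run of length $\ell$ is covered by a single vertex. It then suffices to produce one of three configurations: a run of length $\ge 6$ (this yields statement~1); two cyclically adjacent runs whose lengths sum to $\ge 7$, whose two apexes then cover $7$ consecutive edges (statement~2); or two runs of length $\ge 4$, whose apexes each cover $4$ consecutive edges, giving $8$ edges in total as two disjoint chains (statement~3). I would also exploit the ``$2+f+b$'' extension from incident edges to lengthen a vertex's coverage across a run boundary whenever a run by itself is too short.

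The main work, and the main obstacle, is that these run-length targets cannot be forced by edge-counting alone: purely combinatorially, a cyclic sequence of run lengths such as $(3,3,3)$ has $\sum_i \ell_i=n$ yet meets none of the three targets, so the extra leverage must come from the triangle count $n-2$ together with the way adjacent fans interlock. The key structural input I would develop is that, because there are only $n-2$ triangles but $n$ boundary edges, many triangles must be ears, and an ear is simultaneously the apex triangle of the two boundary edges belonging to two \emph{different} adjacent runs; this forces those neighboring runs to share a vertex and lets the incident-edge extension merge or enlarge their coverage. To make this precise I would root the dual tree (or walk along its diameter path) and carry out a case analysis on the local fan structure at each junction between consecutive runs, distinguishing whether a single fan already reaches $6$ edges (statement~1) or whether two neighboring fans must be combined (statements~2 and~3); the hypothesis $n\ge 8$, i.e.\ at least six triangles, is what guarantees enough room that an all-short-run configuration cannot persist around the entire boundary. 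I expect this exhaustive junction analysis -- in particular ruling out globally short-run configurations and correctly accounting for reflex vertices that truncate a fan before its outer edges become boundary edges -- to be the technical heart of the proof.
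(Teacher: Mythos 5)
Your proposal sets up the right translation but omits the actual proof. The two correct ingredients---that a vertex $v$ sees every triangle of the triangulation incident to it (hence its two incident boundary edges and every boundary edge of which it is the apex), and that run-length counting alone cannot force any of the three configurations---are only the setup. The step that would actually produce one of the three configurations, your ``exhaustive junction analysis'' ruling out boundaries made entirely of short, non-combinable runs, is never carried out: you explicitly defer it as the expected ``technical heart of the proof.'' Since the entire difficulty of the lemma lives exactly there, the proposal as written establishes nothing; for all it shows, every run could have length at most $3$, no two adjacent runs could sum to $7$, and no two runs of length $4$ could exist, and you offer no mechanism (beyond naming ears and the triangle count as ``leverage'') that excludes this. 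Identifying the obstacle is not the same as overcoming it, so this is a genuine gap rather than a presentational one.

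It may help to see how the paper's proof sidesteps the global run analysis entirely by localizing at the two ends of a diameter path $\delta$ of the dual tree. Since $n\ge 8$, the dual tree has at least $6$ nodes and (having maximum degree $3$) diameter at least $3$; the endpoints $p,q$ of $\delta$ are leaves, i.e.\ ears, with neighbors $p',q'$ on $\delta$, and a three-way case analysis on the degrees of $p'$ and $q'$ finishes the proof. If $p'$ has degree $3$, it has a second leaf neighbor, and the three triangles around $p'$ share one vertex that covers $4$ consecutive boundary edges; if $p'$ has degree $2$, the two triangles $p,p'$ share a vertex $v$ covering $3$ consecutive edges, and $v$'s other incident boundary edge extends this to $4$ consecutive edges. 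Doing this at both ends yields two vertices, each covering $4$ consecutive edges, and the diameter bound forces the two $4$-edge sets to overlap in at most one edge; the cases ``same covering vertex,'' ``one shared edge,'' and ``disjoint'' give statements $1$, $2$, and $3$ respectively. If you want to salvage your approach, this is the idea to import: rather than controlling every junction between runs around the whole boundary, go to the ears at the far ends of a longest dual path, where the needed $4$-edge coverage and the near-disjointness come for free.
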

\begin{proof}
    Triangulate $P$ and let $T$ be the dual tree of the triangulation. The largest vertex-degree in $T$ is at most $3$. Since $n\ge 8$, $T$ has at least $6$ nodes, and hence its diameter is at least $3$. The diameter is determined by a path, say $\delta$, with at least $4$ nodes where the two end-nodes, say $p$ and $q$, are leaves.
    Let $p'$ and $q'$ be the unique neighbors of $p$ and $q$ in $T$, respectively. We consider the following three cases that are depicted in Figure~\ref{degree-fig}.

\begin{figure}[htb]
	\centering
\setlength{\tabcolsep}{0in}
	$\begin{tabular}{ccc}
	\multicolumn{1}{m{.333\columnwidth}}{\centering\vspace{0pt}\includegraphics[width=.31\columnwidth]{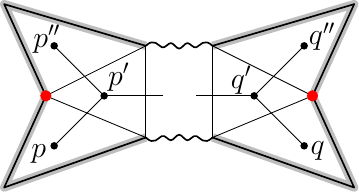}}
	&\multicolumn{1}{m{.333\columnwidth}}{\centering\vspace{0pt}\includegraphics[width=.31\columnwidth]{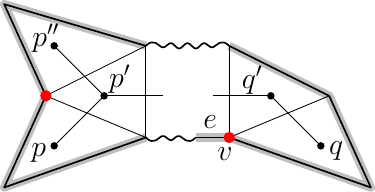}}&
 \multicolumn{1}{m{.333\columnwidth}}{\centering\vspace{20pt}\includegraphics[width=.31\columnwidth]{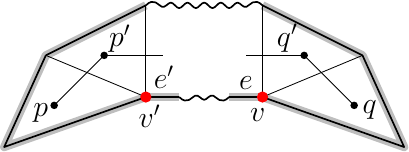}}
		\\
		(a)   &(b)&(c) 
	\end{tabular}$	
\caption{Illustration of the proof of Lemma~\ref{cover-lemma}.}
\label{degree-fig}
\end{figure}
   
    \begin{itemize}
        \item {\em Both $p'$ and $q'$ have degree 3.} This case is depicted in  Figure~\ref{degree-fig}(a). Since $\delta$ is a longest path in the tree, $p'$ has a neighbor $p''$ in $T$, different from $p$, that is also a leaf.  Since $p$ and $p''$ are leaves, the three triangles that correspond to $p,p',p''$ consist of $4$ consecutive edges of $\partial P$ that are visible from their middle vertex. A similar argument holds for $q'$. Since the diameter is at least 3, the four edges that we get for $q'$ are different from those for $p'$. The two middle vertices cover the $8$ edges, and thus statement 3 holds.
        \item {\em One of $p'$ and $q'$ has degree 3 and the other has degree 2.} Due to symmetry assume that $p'$ has degree $3$, as in  Figure~\ref{degree-fig}(b). As in the previous case, we get four consecutive edges of $\partial P$ for $p'$ that can be covered by their middle vertex. The two triangles that correspond to  $q,q'$ consist of three consecutive edges of $\partial P$ where one of their vertices, say $v$, is incident to both triangles. The vertex $v$ covers these three edges and another edge $e$ on the remaining portion of $\partial P$. If $e$ is among the four edges of $p'$, then statement 2 holds; otherwise statement 3 holds.
        \item {\em Both $p'$ and $q'$ have degree 2.} See  Figure~\ref{degree-fig}(c). Define $e$, $v$, and its four covering edges as in the previous case. Let $e'$ and $v'$ be the analogous edge and vertex for $p'$. If $v=v'$, then $v$ covers six consecutive edges; thus, statement 1 holds. Assume that $v\neq v'$. Then, the four edges of $p'$ and the four edges of $q'$ share at most one edge, which could be $e$ or $e'$. If they share an edge then statement 2 holds, otherwise statement 3 holds.\qedhere
    \end{itemize}
\end{proof}

\begin{theorem}
In any polygon $P$ with $n\ge 4$ vertices there exists a set of at most $\lfloor \frac{n-2}{2} \rfloor$ points that guard the boundary of $P$ contiguously.
   This bound is tight for even and odd $n$.
\end{theorem}


\begin{proof}
If $n=4,5$, then one vertex of the polygon covers the entire boundary (to verify this, observe that in any triangulation of a $5$-gon, there is a vertex that is incident to the three resulting triangles). If $n=6$, then any triangulation contains a diagonal that partitions the boundary of $P$ into 2 and 4 consecutive edges;\footnote{For any $k\ge 2$ there is a diagonal that cuts off at least $k$ and at most $2k {-}2$ edges of the polygon; see~\cite{Biniaz2024}.} each partite can be covered by one vertex. If $n=7$, then there is a diagonal that partitions the boundary of $P$ into 3 and 4 consecutive edges; again, each partite can be covered by one vertex.   

Assume that $n\ge 8$. Then, at least one of the cases in Lemma~\ref{cover-lemma} holds. For each case, we show how to cover $\partial P$ by the number of guards that is claimed.

\begin{itemize}
    \item {\em Statement 1 holds.} We cover $6$ consecutive edges by $1$ guard and the remaining chain of $n{-}6$ edges by at most $\lfloor\frac{n-5}{2}\rfloor$ guards. Thus, the total number of guards is at most $\lfloor\frac{n-3}{2}\rfloor$. 
    \item {\em Statement 2 holds.} We cover $7$ consecutive edges by $2$ guards and the remaining chain of $n{-}7$  edges by at most $\lfloor\frac{n-6}{2}\rfloor$ guards.
    The total number of guards is at most {\color{mycolor}$\lfloor\frac{n-2}{2}\rfloor$}.
    \item {\em Statement 3 holds.} We cover the $8$ edges by two guards. If $8$ edges are consecutive, then we cover the remaining chain as in the previous case. Assume they are not consecutive and thus split into two sets, each having four consecutive edges. The remaining $n{-}8$ edges of $\partial P$ form two disjoint chains of $n_1$ and $n_2$ edges that can be covered by at most $\lfloor\frac{n_1+1}{2}\rfloor$ and $\lfloor\frac{n_2+1}{2}\rfloor$ guards, respectively. The total number of guards is at most {\color{mycolor}$\lfloor\frac{n_1+1}{2}\rfloor+\lfloor\frac{n_2+1}{2}\rfloor+2\le\frac{n-2}{2}$. 
    This equals $\lfloor\frac{n-2}{2}\rfloor$ when $n$ is even. When $n$ is odd, one of the chain sizes, say $n_1$, is even, and hence the chain itself can be covered by at most $\lfloor\frac{n_1}{2}\rfloor$ guards, which would result in at most $\frac{n-3}{2}\le \lfloor\frac{n-2}{2}\rfloor$ total guards.}
\end{itemize}

\begin{figure}[htb]
	\centering
\setlength{\tabcolsep}{0in}	\includegraphics[width=.6\columnwidth]{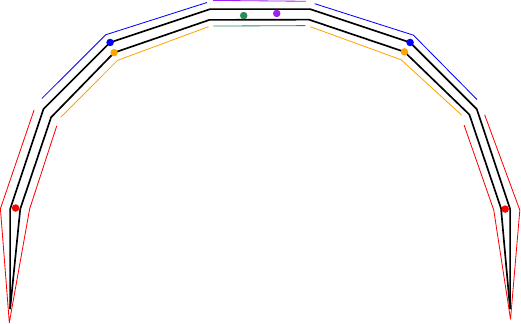}\\	\caption{Illustration of the lower bound $\lfloor\frac{n-2}{2}\rfloor$ guards. The polygonal arcs around the boundary are maximal chains that can each be guarded by a single guard.}
\label{lower-bound-fig}
\end{figure}

To verify the lower bound, see the polygon in Figure~\ref{lower-bound-fig} of size $n=4k+2$, for some integer $k$, which consists of two convex chains of odd length $\frac{n}{2}$. The chains are placed close to each other such that the midpoints of no three consecutive edges on a chain are visible from the same point in the polygon.  A guard cannot cover more than two consecutive midpoints on each chain. To cover each chain, we need at least $\lfloor\frac{n/2+1}{2}\rfloor$ guards, out of which only two guards that cover the two endpoints of the chain can be shared. Thus, we need at least $2\lfloor\frac{n/2+1}{2}\rfloor-2=\lfloor\frac{n-2}{2}\rfloor$ guards to cover the boundary of this polygon contiguously. This verifies the tightness of the bound for even $n$.
By adding a vertex on any edge of the polygon in Figure~\ref{lower-bound-fig}, a lower bound example for odd $n$ is obtained. 
\end{proof}

\section{Preliminaries for the Algorithms}
For two points $p$ and $q$ in the plane, we denote by $pq$ the straight line segment between $p$ and $q$. A {\em ray} from $p$ toward $q$, denoted $\overrightarrow{pq}$, is the half-line
that starts from $p$ and passes through $q$. A {\em wedge} is a region of the plane that is bounded by two rays starting from the same point known as the {\em apex} of the wedge. Notice that such two rays partition the plane into two wedges. 

Let $P$ be a simple polygon. Two
points $p$ and $q$ of $P$ are said to be {\em visible} (or {\em see} each other) if the line segment $pq$ lies
totally inside (in the interior or on the boundary of) $P$. The {\em visibility
region} of $P$ from a point $q\in P$, denoted $V(P,q)$, is the set of all points of $P$ that are visible
from $q$; see Figure~\ref{guard-fig}. It is well-known that $V(P,q)$ is a simple polygon and can be computed in linear time, see e.g.~\cite{ElGindy1981,Lee1983,Preparata1985}.  {\color{mycolor}The region $V(P,q)$ is also referred to as the {\em visibility polygon} of $q$.
The visibility polygon is {\em star-shaped}, i.e., all its points are visible from a single point called a {\em center}. The set of all centers of a star-shaped polygon is called its {\em kernel}; it can be computed in linear time~\cite{Lee1979}.}

\begin{figure}[!ht]
	\centering
\setlength{\tabcolsep}{0in}	\includegraphics[width=.5\columnwidth]{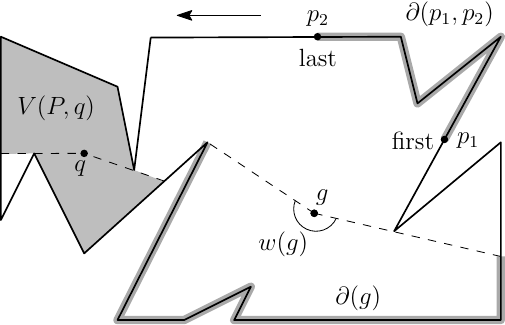}\\	\caption{The boundary of the polygon is assumed to be directed counter-clockwise.}
\label{guard-fig}
\end{figure}

To simplify our notation, we write $\partial$ for $\boundary{P}$. 
{\color{mycolor}We assume that $\partial$ is a directed cycle, with vertices ordered counter-clockwise.}
For an ordered pair $(p_1,p_2)$ of two points on $\partial$, we denote by $\partial(p_1,p_2)$  the portion of $\partial$ from $p_1$ to $p_2$ in counter-clockwise direction---$\partial(p_1,p_2)$ is a directed path. Observe that $\partial$ is the union of $\partial(p_1,p_2)$ and $\partial(p_2,p_1)$. We refer to $p_1$ and $p_2$ as the {\em first} and the {\em last} endpoints of $\partial(p_1,p_2)$, respectively. Since $\partial(p_1,p_2)$ is directed, for any two points $p$ and $q$ on $\partial(p_1,p_2)$, we can tell if $p$ appears {\em before} or {\em after} $q$.

Let $g$ be a guard in some contiguous boundary guarding of $\partial$. 
{\color{mycolor}Let $\partial(g)$ denote a connected subset of $\partial$ that is covered by $g$ (refer to Figure~\ref{guard-fig}); in other words $\partial(g)$ is portion of $\partial$ that is assigned to $g$ by the guarding.} Let $p_1(g)$ and $p_2(g)$ denote the first and last endpoints of $\partial(g)$.
We denote by $w(g)$ the wedge from the ray $\overrightarrow{gp_1(g)}$ to the ray
    $\overrightarrow{gp_2(g)}$. We refer to the angle at the apex of $w(g)$ as the {\em covering angle} of $g$; when it is clear from the context we use $w(g)$ to refer to this angle.  
{\color{mycolor}Let $\Gamma$ be a contiguous guarding of $\partial$, which is a set of guards with a connected subset of $\partial$ assigned to every guard. We say that $\Gamma$ is {\em minimal} if the removal of any guard from $\Gamma$ would result in an uncovered portion of  $\partial$. Thus, for every guard $g$ in a
    minimal contiguous guarding, there exists a point $g_p$ of $\partial$ that is
    guarded by only $g$, 
    because otherwise we could remove $g$ from the guard set. Without further mentioning, all guard sets considered in this paper are 
minimal. 
For any (minimal) contiguous guarding $\Gamma$, the
    counter-clockwise order of the $\{g_p : g \in \Gamma\}$ along $\partial$ induces an
    order of the guards in $\Gamma$.
Thus, the terms {\em previous} guard, {\em next} guard, and {\em consecutive} guards are well defined.} The following simple observation is valid for any guarding with at least two guards.

\begin{observation}
    \label{endpoint-obs}
    Let $g$ be a guard in a guarding of $\partial$. Then, the first endpoint of $\partial(g)$ is covered by $g$ and by the previous guard. Similarly, the last endpoint of $\partial(g)$ is  covered by $g$ and  the next guard. 
\end{observation}

{\color{mycolor} A {\em segment} of $\partial$ is an edge or a portion of some edge of $\partial$. A {\em polygonal path} on $\partial$ is a contiguous portion  of $\partial$ where its endpoints can by anywhere on $\partial$. Let $s$ be a segment of $\partial$.} We define the  {\em covering region} of $s$, denoted $C(P,s)$, as the set of all points $p$ of $P$ where each point $p$ sees all the points of $s$. Notice that every point of $s$ and every point of $C(P,s)$ are visible to each other.\footnote{Also known as {\em complete visibility} where every point of $s$ and every point of $C$ are visible to each other~\cite{Avis1981}.} We define the covering region $C(P,\delta)$ of a polygonal path $\delta$ on $\partial$ analogously. If $p$ is a point in the covering region of $s$ (resp. $\delta$) then we say that $p$ {\em covers} $s$ (resp. $\delta$).
The following lemmas, though very simple, play important roles in our algorithms. Lemma~\ref{edge-covering-endpoints-obs} and Lemma~\ref{path-covering} can also be implied from ideas of \cite[Lemma~1 and Theorem~1]{Ghosh1991}; see also the companion paper \cite{Ghosh1996}.

\begin{lemma}
\label{edge-covering-endpoints-obs}
    The covering region of a polygonal path $\delta$ on $\partial$ is the intersection of the visibility polygons of all vertices of $\delta$. 
\end{lemma}

\begin{proof}
Since the covering region of $\delta$ is the intersection of the covering regions of all segments of $\delta$, it suffices to prove that for each segment $s\!=\!ab$ on $\delta$ we have $C(P,s)\!=\!V(P,a)\!\cap\! V(P,b)$.
To prove this we show that a guard $g$ covers $s$ if and only if $g$ sees $a$ and $b$.

For the forward direction, if $g$ covers $s$, it sees all points of $s$, including its endpoints $a$ and $b$.
    For the converse, if $g$ sees both $a$ and $b$, then the interior of the triangle $\bigtriangleup pab$ is in the interior of $P$, and thus $g$ sees all points on the segment~$s$.
\end{proof}

\begin{lemma}
\label{path-covering}
Let $\delta$ be a  polygonal path on $\partial$. Then, $C(P,\delta)$ is a simple polygon, and it can be computed in a polynomial time.
\end{lemma}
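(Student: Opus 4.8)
The plan is to establish two facts separately: that $C(P,\delta)$ is a simple polygon, and that it can be computed in polynomial time. I would begin with the structural claim. By the remark immediately preceding the lemma, the covering region of $\delta$ equals the intersection of the visibility polygons of all vertices of $\delta$, that is, $C(P,\delta)=\bigcap_{v} V(P,v)$ over the vertices $v$ of $\delta$. Each $V(P,v)$ is a simple polygon contained in $P$ (a known fact stated earlier), and each is star-shaped. So the task reduces to understanding the intersection of finitely many such regions.

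The key step is to argue that this intersection is connected and simply connected, so that it is a single simple polygon rather than a collection of pieces or a region with holes. Here I would use a stronger property than mere simplicity: each $V(P,v)$ is a \emph{star-shaped} polygon whose kernel contains $v$ (indeed $v$ sees every point of $V(P,v)$). A clean way to obtain connectedness is to observe that $C(P,\delta)$ is the set of points that see every point of $\delta$, equivalently every vertex of $\delta$ by Observation~\ref{edge-covering-endpoints-obs}; any such point $p$ sees all of $\delta$, and in particular sees a fixed reference point $a$ of $\delta$, so the segment $pa$ lies in $P$. I would try to show that the whole region is star-shaped, or at least that it deformation-retracts to a point inside it, by exploiting that for any $p\in C(P,\delta)$ the segment from $p$ toward a suitable anchor stays inside $C(P,\delta)$. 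Since $P$ has no holes, the finite intersection of subpolygons of $P$ cannot create a hole (a hole would have to be bounded by edges lying in the interior of $P$, contradicting that $P$ is simply connected); combined with connectedness this yields a simple polygon.

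For the computational claim, the plan is direct: compute each $V(P,v)$ in linear time by the cited visibility-polygon algorithms, then intersect them. The intersection of a simple polygon with the interior of $P$ (equivalently, with each half-plane-like visibility constraint) can be carried out by repeated polygon intersection. With $\delta$ having at most $n$ vertices and each $V(P,v)$ having $O(n)$ complexity, the running time and output size stay polynomial, e.g.\ $O(n^2)$ or so, which suffices for the lemma.

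I expect the main obstacle to be the connectivity/simple-connectedness argument rather than the complexity bound. The subtlety is that an arbitrary intersection of star-shaped polygons need not be star-shaped, so the cleanest route is to lean on the ambient simple polygon $P$: because $P$ is simply connected and every $V(P,v)$ is a ``radially convex'' subset seen from $v$, I would argue that $C(P,\delta)$ inherits simple-connectedness from $P$ and that its boundary is a single closed curve. Making the ``no holes'' and ``single component'' claims fully rigorous—ruling out the region splitting into several disconnected simple polygons—is the delicate part, and I would handle it by a retraction or monotonicity argument tailored to the fact that all points of $C(P,\delta)$ commonly see every point of the connected path $\delta$.
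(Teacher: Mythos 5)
Your reduction to the intersection of vertex visibility polygons and your computational plan (compute each $V(P,v)$ in linear time, intersect iteratively) match the paper exactly, and that part is fine. The genuine gap is in the structural claim: you correctly identify connectedness of $C(P,\delta)$ as the crux, but you never actually prove it, and the routes you sketch do not work. Your proposed anchor is ``a fixed reference point $a$ of $\delta$,'' but an endpoint of $\delta$ generally does \emph{not} belong to $C(P,\delta)$ at all: if $\delta$ turns around reflex vertices, the point $a$ cannot see all of $\delta$, so there is nothing in $C(P,\delta)$ to retract onto there. Even granting an anchor inside $C(P,\delta)$, the claim that ``the segment from $p$ toward the anchor stays inside $C(P,\delta)$'' is exactly star-shapedness, which you do not establish; note that $C(P,\delta)$ is in general not convex, so segments between two of its points can leave it, and this is precisely the difficulty that a proof of connectedness must overcome rather than assume. (Your ``no holes'' remark is also loose as stated, though that part is repairable: a hole in the intersection would yield a loop inside some $V(P,v)$ encircling a point outside it, contradicting that $V(P,v)$ is a simple polygon.)

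For comparison, the paper's proof is a concrete two-point construction rather than a retraction. Given $p,q\in C=C(P,\delta)$ and the endpoints $a,b$ of $\delta$, it forms the star-shaped ``fan'' $C_p$ bounded by $\delta$ and the rays $\overrightarrow{pa}$, $\overrightarrow{pb}$, and similarly $C_q$. If $q\in C_p$ or $p\in C_q$, then every point of the segment $pq$ lies in $C$: any blocker of visibility between a point $x\in pq$ and a point $y\in\delta$ would be an edge of $\delta$ that also blocks $p$ or $q$ from seeing $y$, a contradiction. Otherwise, a boundary ray of $C_p$ crosses a boundary ray of $C_q$ at a point $c$; the same blocking argument shows $c\in C$, and then the two segments $pc$ and $cq$ lie in $C$, giving a path from $p$ to $q$. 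This localizes all the topology into one elementary visibility argument, which is what your outline is missing; without it (or a proof that $C(P,\delta)$ is star-shaped, which the paper never claims and does not need), the main assertion of the lemma remains unproven.
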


\begin{proof}
 To prove the first statement it suffices to show that for any two points $p$ and $q$ in $C{=}C(P,\delta)$, there is a path between $p$ and $q$ in $C$. Let $a$ and $b$ denote the endpoints of $\delta$.
    Let $C_p$ be the star-shaped polygon bounded by $\delta$ and the  rays $\overrightarrow{pa}$ and $\overrightarrow{pb}$. Define $C_q$ analogously.

    If $q\in C_p$ or $p\in C_q$, then any point on the line segment $pq$ is in $C$ because otherwise, the visibility of some point $x\in pq$ and some point $y\in \delta$ is blocked by some edge $e\in \delta$ in which case $y$ cannot be seen by $p$ or by $q$, a contradiction. Thus, $pq$ is a path in $C$.

    Assume that $q\notin C_p$ and $p\notin C_q$. In this case, a boundary ray of $C_p$ and a boundary ray of $C_q$ intersect at a point $c$, which is different from $a$ and $b$. We claim that $c$ is in $C$, because otherwise the visibility of $c$ and some point $y\in \delta$ is blocked by some edge $e\in\delta$, in which case $y$ cannot be seen by $p$ or $q$, a contradiction. Since $c\in C$, the previous argument implies that  the segments $pc$ and $cq$ are in $C$; these segments form a path between $p$ and $q$ in $C$.

    To prove the second statement, recall that $C(P,\delta)$ is the intersection of visibility polygons of all vertices of $\delta$ (by Lemma~\ref{edge-covering-endpoints-obs}). Since the visibility polygon from a point \cite{Lee1983} and the intersection of two polygons \cite{Berg2008} can be found in polynomial time, $C(P,\delta)$ can be computed in polynomial time by walking on $\delta$ and computing the intersection of the current covering region (which is a simple polygon by the first statement) with the visibility polygon of the next vertex on $\delta$.
\end{proof}

\begin{lemma}
\label{farthest-lemma}
Let $p$ be a fixed point on $\partial$. In polynomial time, we can find the farthest point $q$ from $p$ in counter-clockwise (or clockwise) direction along $\partial$ such that $\partial(p,q)$ can be covered by one guard. Such a guard can also be found in a polynomial time.
\end{lemma}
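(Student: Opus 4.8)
The plan is to exploit the monotonicity of coverability as $q$ sweeps counter-clockwise, and then to reduce the whole computation to a polynomial number of applications of Lemma~\ref{path-covering}. First I would establish monotonicity. By the remark following Observation~\ref{edge-covering-endpoints-obs}, $C(P,\partial(p,q))$ equals the intersection of the visibility polygons $V(P,v)$ taken over all vertices $v$ of the chain $\partial(p,q)$ together with its two endpoints. Hence if $q_1$ appears before $q_2$ in counter-clockwise order, then $\partial(p,q_1)\subseteq\partial(p,q_2)$, the intersection defining $C(P,\partial(p,q_2))$ has at least as many factors, and so $C(P,\partial(p,q_2))\subseteq C(P,\partial(p,q_1))$. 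Consequently the set of $q$ for which $\partial(p,q)$ is coverable by a single guard is an initial counter-clockwise interval, and there is a unique farthest such point $q^\ast$ (stopping the sweep after one full loop handles the star-shaped case). A guard is then any point of the nonempty polygon $C(P,\partial(p,q^\ast))$, which Lemma~\ref{path-covering} produces in polynomial time; so it remains only to \emph{locate} $q^\ast$.

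Second, I would run an outer loop over the edges of $\partial$ in counter-clockwise order starting at $p$. Writing $v_k$ for the current frontier vertex, I maintain the running covering region $R_k=C(P,\partial(p,v_k))$, obtained from $R_{k-1}$ by intersecting with one more visibility polygon; by Lemma~\ref{path-covering} each $R_k$ is a simple polygon computable in polynomial time, and there are only $O(n)$ steps. I advance past an edge $e_k=v_kv_{k+1}$ exactly when $R_k\cap V(P,v_{k+1})\neq\emptyset$, since then a single guard sees both $v_k$ and $v_{k+1}$ and hence, by Observation~\ref{edge-covering-endpoints-obs}, covers all of $e_k$. The first edge on which this fails while $R_k\neq\emptyset$ is the edge containing $q^\ast$.

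Third---and this is the crux---on that critical edge I must pin down the exact farthest reachable point. Here I would use the following reformulation: since every $g\in R_k$ already sees $v_k$, Observation~\ref{edge-covering-endpoints-obs} shows that the points of $e_k$ covered contiguously from $v_k$ by $g$ form the sub-segment from $v_k$ to the farthest visible point $\sigma(g)$. As all these sub-segments share the endpoint $v_k$, their union over $g\in R_k$ is again an initial sub-segment, so $q^\ast=\sup_{g\in R_k}\sigma(g)$ and it suffices to maximize the reach $\sigma$ over the region $R_k$. The only obstruction to seeing farther along $e_k$ is a reflex vertex of $P$ on whose supporting line the line of sight $\overrightarrow{gq}$ becomes tangent; at the optimum the witness guard $g^\ast$ lies on $\partial R_k$ and the limiting blocker is some reflex vertex $r$, so that $q^\ast$ is the point where $e_k$ meets a ray determined by $r$ together with a vertex or edge of $R_k$.

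Rather than track this optimum continuously, I would enumerate candidates: for every reflex vertex $r$ of $P$ and every vertex of $R_k$, intersect the corresponding supporting line with $e_k$ to obtain polynomially many candidate positions, test each candidate $q$ by checking $C(P,\partial(p,q))\neq\emptyset$ via Lemma~\ref{path-covering}, and return the farthest candidate that passes. Monotonicity guarantees the test behaves as a threshold, and completeness of the candidate family follows because $C(P,\partial(p,q))$ can vanish only when two of its bounding constraints---edges of $P$, edges of $R_k$, or windows through reflex vertices---come together. The main obstacle I anticipate is precisely this last geometric characterization: proving that $q^\ast$ is realized by such a combinatorially determined event and that the finite candidate set is guaranteed to contain it (in particular, excluding the possibility that the maximizer $g^\ast$ sits in the relative interior of an edge of $R_k$ with no tangency pinning it). Once that is settled, each of the $O(n)$ edges costs polynomially many covering-region computations, each polynomial by Lemma~\ref{path-covering}, so the whole procedure runs in polynomial time; the clockwise case is symmetric.
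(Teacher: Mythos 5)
Your proposal is correct and follows essentially the same route as the paper's proof: advance counter-clockwise edge by edge, maintaining the covering region of the chain via Lemma~\ref{path-covering} until appending the next edge would make it empty, and then locate $q$ on that critical edge as the intersection of the edge with rays through reflex vertices of $P$ emanating from vertices of the current covering region. The completeness concern you flag (a maximizer sitting in the relative interior of an edge of $R_k$) is exactly the step the paper also asserts without detailed justification---``$q$ is the intersection point of $e_{t+1}$ and a ray from a vertex of $C$ through a reflex vertex of $P$''---and your additional per-candidate nonemptiness test via Lemma~\ref{path-covering} only makes the soundness direction more robust.
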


\begin{proof}
Due to symmetry, we explain the proof for the counter-clockwise direction. Let $e$ be the edge of $\partial$ that contains $p$, and let $p_0$ be the endpoint of $e$ after $p$. Denote the segment $pp_0$ by $e_0$. 
Find the longest polygonal path $\delta$ consisting of a sequence $e_0,e_1,\dots,e_t$ of consecutive edges of $\partial$ such that the covering region $C$ of $\delta$ is nonempty. In view of Lemma~\ref{path-covering}, this path can be found in polynomial time by iteratively considering the edges after $e_0$ and computing the covering region in each iteration. 

Let $e_{t+1}$ be the edge of $\partial$ after $e_t$. Our choice of $\delta$ implies that the covering region of the concatenation of $\delta$ and $e_{t+1}$ is empty. Moreover, $q$ lies in the interior or on an endpoint of $e_{t+1}$. Let $p_t$ and $p_{t+1}$ be the endpoints of $e_{t+1}$ where $p_{t+1}$ appears after $p_t$. Observe that every point of $C$ sees $p_t$, but no point of $C$ sees $p_{t+1}$. The point $q$ is then the farthest point from $p_t$ on $e_{t+1}$, that is visible from a point of $C$. In particular, as proven in the next paragraph, $q$ is the intersection point of $e_{t+1}$ and a ray from a vertex of $C$ through a reflex vertex of $P$. Thus $q$ can be found in polynomial time by checking all such rays.

Now we show that if $q$ is visible form a point $u\in C$ then it is visible from some vertex of $C$. The line through $e_{t+1}$ does not intersect $C$ because otherwise the intersection point  would see $p_{t+1}$.
Rotate $P$ such that the ray $\overrightarrow{p_{t+1}p_t}$ is upward. Assume that  $C$ is to the right side of the ray. 
Let $u'$ be the first point on the boundary of $C$ that is hit by the ray $\overrightarrow{qu}$. Let $e$ be the edge on the boundary of $C$ that contains $u'$, and let $v$ be the vertex of $e$ to the left of $\overrightarrow{qu}$. Notice that every point of $e$ sees $p_t$. Thus, the open quadrilateral $v,p_t,q,u'$ does not intersect $\partial$, and hence $v$ sees $q$ because $vq$ is a diagonal of the quadrilateral. If $C$ is to the left of $\overrightarrow{p_{t+1}p_t}$ then $q=p_t$ and a similar argument carries over.
\end{proof}

\section{A Greedy Algorithm}
\label{greedy-section}
In this section we present a greedy algorithm, for contiguous guarding of $\partial$, that finds a guard set $\Gamma$ of size at most $\OPT{+}1$.

If the intersection of the supporting halfplanes of the edges of $P$ is nonempty, then $P$ is star-shaped.\footnote{The {\em supporting halfplane} of an edge $e$ of $P$ is the halfplane its boundary goes through $e$ and it lies on the side of $e$ that is interior to $P$.} In this case, by placing a guard at the intersection we cover $\partial$, and thus $|\Gamma|=1$.

From now on assume that $P$ is not star-shaped.
Let $p_1$ be an arbitrary point on $\partial$, referred to as the {\em starting point}. Let $p_2\in \partial$ be the farthest point from $p_1$ in counter-clockwise direction such that $\partial(p_1,p_2)$ can be covered by one guard.  Let $p'_1\in \partial$ be the farthest point from $p_2$ in clockwise direction such that $\partial(p'_1,p_2)$ can be covered by one guard. See Figure~\ref{greedy-fig}. Observe that $p_1\in \partial(p'_1,p_2)$; it might be the case that $p'_1=p_1$.
Let $p_3,\dots,p_m$, with $m\ge 3$, be the points on $\partial$ such that 

\begin{enumerate}
    \item $p_{i}$ is the farthest point from $p_{i-1}$ in counter-clockwise direction such that $\partial(p_{i-1},p_i)$ can be covered by one guard, and
    \item $m$ is the smallest index for which $p'_1\in \partial(p_{m-1},p_m)$.
\end{enumerate}

We cover $\partial(p'_1,p_2)$ by a guard $g_1$, and for each $i\in\{3,4,\dots,m\}$ we cover $\partial(p_{i-1},p_i)$ by a guard $g_{i-1}$ as in Figure~\ref{greedy-fig}. In this case $|\Gamma|=m{-}1$. By Lemma~\ref{farthest-lemma} the points $p_2, p'_1, p_3,\dots, p_m$ and the guards $g_1,\dots,g_{m-1}$ can be computed in polynomial time.

\begin{figure}[htb]
	\centering
\setlength{\tabcolsep}{0in}
	$\begin{tabular}{cc}
	\multicolumn{1}{m{.48\columnwidth}}{\centering\vspace{0pt}\includegraphics[width=.448\columnwidth]{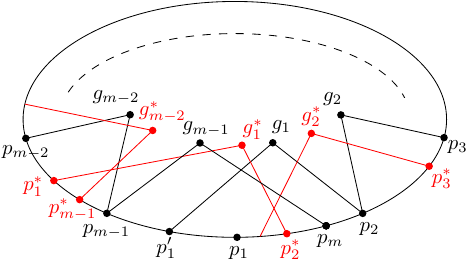}}
	&\multicolumn{1}{m{.52\columnwidth}}{\centering\vspace{0pt}\includegraphics[width=.52\columnwidth]{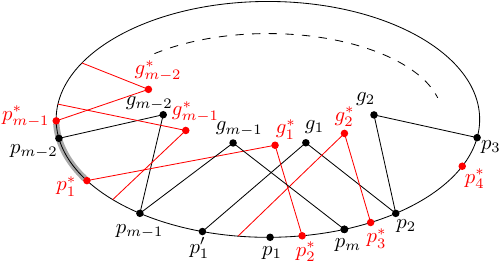}}
		\\
		(a)   &(b) 
	\end{tabular}$	
\caption{Illustration of the proof of Theorem~\ref{greedy-thr}. (a) $p_1$ is covered by one guard $g^*_1$ in an optimal solution. (b) $p_1$ is covered by two guards $g^*_1$ and $g^*_2$ in an optimal solution.}
\label{greedy-fig}
\end{figure}

\begin{theorem}
\label{greedy-thr}
    The Greedy Algorithm, starting from an arbitrary point $p_1$, runs in polynomial time and returns a guard set $\Gamma$ of size at most $\OPT{+}1$. Moreover, if $p_1$ is covered by two guards in some optimal solution, then  $|\Gamma|=\OPT$. 
\end{theorem}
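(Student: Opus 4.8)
The plan is to reduce the cyclic covering produced by the greedy algorithm to an ordinary covering of a \emph{linear} chain, where greedy is provably optimal, and to control the ``wrap-around'' penalty by counting how many optimal guards cover the starting point $p_1$. Throughout I fix an optimal (minimal) contiguous guarding $\Gamma^*$ with $|\Gamma^*|=\OPT$. The polynomial running time is immediate from Lemma~\ref{farthest-lemma}, since each of the points $p_2,p_1',p_3,\dots,p_m$ and each guard is computed in polynomial time and there are $O(n)$ of them. Note also that $|\Gamma|=m-1\ge\OPT$ always, because $\Gamma$ is itself a valid guarding.

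First I would prove a \emph{stays-ahead} lemma for linear chains. I regard the greedy guards as a contiguous cover of the chain $\partial(p_1,p_1')$ read from $p_1$: guard $g_1$ covers the sub-chain from $p_1$ up to $p_2$, guard $g_2$ covers from $p_2$ to $p_3$, and so on, while the backward part $\partial(p_1',p_1)$ of $g_1$ lies outside this chain. Because $p_2$ is the farthest single-guard reach from $p_1$ and each later $p_{i+1}$ is the farthest single-guard reach from $p_i$, a routine induction shows that for \emph{any} contiguous cover $\mathcal{L}$ of $\partial(p_1,p_1')$, after $i$ guards the greedy reach $p_{i+1}$ is at least as far as the reach of the $i$-th guard of $\mathcal{L}$; the inductive step uses that consecutive arcs of $\mathcal{L}$ overlap, so $p_i$ already lies inside the $(i{+}1)$-st arc of $\mathcal{L}$ and can therefore be matched or beaten. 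Consequently greedy never uses more guards than $\mathcal{L}$, i.e.\ $m-1\le|\mathcal{L}|$.

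It then remains to build, from $\Gamma^*$, an inexpensive linear cover of $\partial(p_1,p_1')$, and here the number of guards of $\Gamma^*$ containing $p_1$ --- which is $1$ or $2$ in a minimal guarding --- is decisive. If two guards $G^+,G^-$ of $\Gamma^*$ cover $p_1$ (Figure~\ref{greedy-fig}(b)), they are the two consecutive guards whose arcs reach, respectively, farthest counter-clockwise past $p_1$ and farthest clockwise before $p_1$. Cutting $\partial$ at $p_1$ and assigning $G^+$ to the start of the chain and $G^-$ to its end splits \emph{no} guard, because every other guard avoids $p_1$ and hence stays contiguous on the cut chain; this yields a contiguous cover of $\partial(p_1,p_1')$ with exactly $\OPT$ guards. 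By the stays-ahead lemma $m-1\le\OPT$, and with $m-1\ge\OPT$ we obtain $|\Gamma|=\OPT$, proving the ``moreover'' statement. If instead a single guard $G$ covers $p_1$ (Figure~\ref{greedy-fig}(a)), then $G$ must cover neighborhoods of $p_1$ on both sides of the cut, so splitting $G$ into two pieces is unavoidable and produces a contiguous cover with $\OPT+1$ guards; stays-ahead then gives $m-1\le\OPT+1$. Since $p_1$ is covered by one or two guards of any minimal optimal solution, the bound $|\Gamma|\le\OPT+1$ holds in all cases.

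The main obstacle is the second case-analysis step: recognizing that a point covered twice by the optimum provides a clean place to cut the cycle into a path without severing any guard, which is exactly what removes the extra ``$+1$''. Two points need care. First, one must verify that the two $p_1$-covering guards are consecutive in the cyclic order and really occupy the two ends of the cut chain, so that the middle guards tile the remaining chain without gaps; this follows from minimality together with the fact that the arcs passing through a common point form a contiguous block in the cyclic order. Second, the stays-ahead induction must be stated against the chain $\partial(p_1,p_1')$ rather than the full loop, so that the backward extension to $p_1'$ is precisely the device that lets the first greedy guard $g_1$ absorb the wrap-around arc $\partial(p_1',p_1)$ at no extra cost.
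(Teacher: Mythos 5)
Your proof is correct and follows essentially the same route as the paper's: both are the standard greedy stays-ahead analysis for circular-arc covering, with the identical case split on whether $p_1$ is covered by one or by two guards of some optimal solution (two covering guards let the cycle be cut at $p_1$ without splitting any optimal guard, which is exactly what removes the $+1$). The paper runs the endpoint-tracking argument directly on the cycle, charging its points $p_2,\dots,p_{m-1}$ against the last endpoints of the optimal guards' arcs, whereas you package the same argument as an explicit cut-at-$p_1$ linearization plus a chain stays-ahead lemma; this is a modular restatement, not a substantively different proof.
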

\begin{proof}
By construction $\Gamma$ covers the entire $\partial$. 
Determining whether $P$ is star-shaped can be done in polynomial time by checking the intersection of the supporting halfplanes of edges. If $P$ is star-shaped, then $\Gamma$ has one guard in which case $|\Gamma|=\OPT$.

Assume that $P$ is not star-shaped, and hence $\OPT\ge 2$. Having $p_1$, we can find $p_2$ in polynomial time by Lemma~\ref{farthest-lemma}. Having $p_2$, we can find $p'_1$ and $g_1$ in polynomial time again by Lemma~\ref{farthest-lemma}. Similarly, for each $p_i$, with $i\ge 2$, we can find $p_{i+1}$ and $g_{i-1}$ in polynomial time. Thus, the algorithm runs in polynomial time.

Now we prove our claim about the size of $\Gamma$, which is $m-1${\color{mycolor}; the proof is a standard argument that shows the analogous statements for the circular arc cover problem.} First, assume that there is no optimal solution in which $p_1$ is covered by two guards. This case is illustrated in Figure~\ref{greedy-fig}(a). Consider some optimal solution and let $g^*_1$ be the only guard in this solution that covers $p_1$. Let $p^*_1, p^*_2$ be the first and last endpoints of $\partial(g^*_1)$, respectively. By our choice of $p_2$, the point $p^*_2$ appears on or before $p_2$. By our choice of $m$ and $p_{m-1}$, the point $p^*_1$ appears on or after $p_{m-2}$, because otherwise $p'_1$ would be in $\partial(p_{m-2},p_{m-1})$. For the next counter-clockwise guard in the optimal solution, say $g^*_2$, the last endpoint $p^*_3$ of $\partial(g^*_2)$ lies on or before $p_3$. Continuing this argument, there must be a guard $g^*_{m-2}$ in the optimal solution such that the last endpoint $p^*_{m{-}1}$ of $\partial(g^*_{m-2})$ appears on or before $p_{m-1}$. Therefore, the optimal solution has at least $m{-}2$ guards. Thus $|\Gamma|\le \OPT+1$.

Now assume that $p_1$ is covered by two guards, say $g^*_1$ and $g^*_2$, in some optimal solution, where $g^*_1$ appears before $g^*_2$; see Figure~\ref{greedy-fig}(b). We follow an argument similar to the previous case from $g^*_2$. By our choice of $p_2$, the last endpoint $p^*_3$ of $\partial(g^*_2)$ appears on or before $p_2$. Continuing this argument, the last endpoint of $p^*_{m-1}$ of $\partial(g^*_{m-2})$ appears on or before $p_{m-2}$. On the other hand, the guards $g^*_1$ and $g^*_2$ cannot cover $p_{m{-}2}$, because otherwise $g_{m-2}$ would cover $p_1$ and hence $p'_1$, contradicting our choice of $m$. Thus the optimal must have another guard $g^*_{m-1}$ to cover the gap between the coverages of $g^*_{m-2}$ and $g^*_1$. Therefore $|\Gamma|=\OPT=m{-}1$.
\end{proof}
The following is a direct corollary of Theorem~\ref{greedy-thr}.
\begin{corollary}
    \label{optimal-cor}
    The contiguous boundary guarding problem can be solved in polynomial time if we know a point of the boundary that is covered by two guards in some optimal solution.
\end{corollary}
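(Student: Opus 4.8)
The plan is to invoke Theorem~\ref{greedy-thr} directly, using the given point as the starting point of the greedy algorithm. Suppose we are handed a point $p\in\partial$ that is covered by two guards in some optimal solution. I would simply run the Greedy Algorithm of Section~\ref{greedy-section} with the starting point $p_1 := p$ and return the guard set $\Gamma$ that it produces.

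Both the correctness and the running time follow from the two parts of Theorem~\ref{greedy-thr}. The first part guarantees that, regardless of the starting point, the algorithm terminates in polynomial time; this settles the complexity claim. The ``Moreover'' part then applies verbatim: since the chosen starting point $p_1=p$ is, by hypothesis, covered by two guards in some optimal solution, the returned set satisfies $|\Gamma|=\OPT$. Hence $\Gamma$ is itself an optimal contiguous guarding of $\partial$, produced in polynomial time, which is exactly what the statement asserts.

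There is essentially no obstacle remaining at the level of this corollary, since all the substantive work is already packaged inside Theorem~\ref{greedy-thr}, whose second conclusion was designed precisely to capture this situation. The only routine point to verify is that the greedy construction is well defined when $p$ is an interior point of an edge rather than a vertex; but the algorithm begins by locating the edge $e$ containing its starting point (via Lemma~\ref{farthest-lemma}), so no modification is needed. The genuine difficulty, which this corollary deliberately leaves open, is how to \emph{obtain} such a point $p$ without already possessing an optimal solution; that is the task deferred to the exact algorithm of Section~\ref{optimal-section}, where a polynomial-size candidate set guaranteed to contain one such point is constructed and the greedy algorithm is run from each of its members.
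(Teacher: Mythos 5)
Your proposal is correct and matches the paper exactly: the paper states this result as a direct corollary of Theorem~\ref{greedy-thr}, obtained by running the greedy algorithm from the given doubly-covered point and invoking the ``Moreover'' clause to get $|\Gamma|=\OPT$. Nothing is missing, and your additional remarks (well-definedness for a point interior to an edge, and that finding such a point is the deferred hard part) are consistent with the paper's treatment.
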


\section{An Optimal Algorithm}
\label{optimal-section}
In this section we present a polynomial-time exact algorithm for contiguous guarding of the boundary $\partial$ of a simple polygon $P$. 
We assume that $P$ is not star-shaped and thus $\OPT\ge 2$.

Without loss of generality, we assume this maximality property: {\em any guard $g$  covers a maximal contiguous portion of $\partial$, i.e., $\partial(g)$ is maximal.}
This implies the following lemma.

\begin{lemma}
\label{on-ray-lemma}
   Let $g$ be a guard in a guarding of $\partial$ such that $\partial(g)$ is maximal. Then, each boundary ray of $w(g)$ goes through at least one reflex vertex of $P$. 
\end{lemma}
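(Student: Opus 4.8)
The plan is to argue about the two boundary rays of $w(g)$ separately and, exploiting the clockwise/counter-clockwise symmetry of the maximality hypothesis, to reduce to a single ray. Write $\partial(g)=\partial(a,b)$, so that the two boundary rays of $w(g)$ are $\overrightarrow{ga}$ and $\overrightarrow{gb}$. I will show that $\overrightarrow{gb}$ passes through a reflex vertex of $P$; the statement for $\overrightarrow{ga}$ then follows by reversing the orientation of $\partial$ and repeating the argument at the first endpoint. Since $g$ covers $b$, the segment $gb$ lies in $P$. The maximality of $\partial(g)$ means that the connected visible arc ending at $b$ cannot be extended in the counter-clockwise direction; because the set of boundary points visible from $g$ is a finite union of closed arcs, there is an entire small sub-arc immediately counter-clockwise of $b$, consisting of points $b'$ approaching $b$, none of which is visible from $g$.

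For each such $b'$ the segment $gb'$ must leave $P$, hence it crosses $\partial$; let $r'$ be the crossing point closest to $g$. As $b'\to b$ the segments $gb'$ converge to $gb$, and (along a subsequence) the points $r'$ converge to a point $r$ on $gb\cap\partial$. Thus $gb$ is tangent to $\partial$ at $r$, in the sense that $gb$ stays in $P$ while all the nearby segments $gb'$ immediately cross to the exterior. I would then use this one-sided tangency to conclude two things: first, that $r$ is a vertex of $P$, since a straight edge offers no such one-sided contact for an intra-polygon segment that is crossed transversally by arbitrarily close segments; and second, that this vertex is \emph{reflex}, since at a convex vertex the local interior cone is too wide to block the family $\{gb'\}$ while still admitting $gb$. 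Either way, $r$ is a reflex vertex lying on the ray $\overrightarrow{gb}$ (with $r$ strictly between $g$ and $b$ in general, and $r=b$ precisely in the case that $b$ itself is a reflex vertex).

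Equivalently, and more cleanly, I can invoke the standard combinatorial structure of the visibility polygon $V(P,g)$, whose computation is already used in Lemma~\ref{path-covering}: the boundary of $V(P,g)$ alternates between sub-arcs of $\partial$ and \emph{windows}, where each window is a chord contained in a ray $\overrightarrow{gr}$ emanating from a reflex vertex $r$ of $P$. A counter-clockwise transition from visible to invisible along $\partial$ occurs exactly at a window, so the endpoint $b$ is either the near endpoint of a window (then $b=r$ is reflex and lies on $\overrightarrow{gb}$) or its far endpoint (then $b$ lies on $\overrightarrow{gr}=\overrightarrow{gb}$ for the reflex vertex $r$ generating that window). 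In both cases $\overrightarrow{gb}$ passes through a reflex vertex, as required.

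The main obstacle is the case analysis concealed in the assertion ``$r$ is a reflex vertex'': one must rule out that the touching point is the interior of an edge or a convex vertex, and must separately handle the degenerate situation in which $b$ is itself a (reflex) vertex versus an interior point of an edge, together with possible collinearities. Making the limiting/tangency argument fully rigorous — in particular, justifying that the loss of visibility just past $b$ is caused by a single reflex vertex collinear with $g$ and $b$, rather than by an accumulation of several edges — is where the real work lies; the visibility-polygon window formulation packages exactly this fact, and it is the route I would ultimately write up.
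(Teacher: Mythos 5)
Your proof is correct, and it establishes exactly the fact the paper's proof rests on, but the two write-ups run in opposite logical directions. The paper argues contrapositively with a one-sentence perturbation: if a boundary ray of $w(g)$ passed through no reflex vertex, one could rotate that ray outward (increasing the covering angle) and thereby extend $\partial(g)$, contradicting maximality---with the justification essentially delegated to a figure. You argue the direct implication: maximality forces the endpoint $b$ of $\partial(g)$ to be a visible-to-invisible transition point along $\partial$, and such a transition can occur only at a window of the visibility polygon $V(P,g)$, whose supporting chord lies on a ray from $g$ through a reflex vertex. Your limiting argument (the crossing points $r'$ of the segments $gb'$ converging to a tangency point $r$, which is then forced to be a reflex vertex rather than an edge-interior point or a convex vertex) is precisely the content left implicit in the paper's ``otherwise we could increase the covering angle,'' and falling back on the standard window structure is the right way to dispose of the degeneracies you flag (collinearity with an edge, $b$ itself being a vertex). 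What each buys: the paper's version is short but hand-waving at the crucial step; yours is longer but self-contained and rigorous, at the cost of importing the window-structure fact about visibility polygons. One small omission in yours: you should note at the outset that $\partial(g)\neq\partial$, so that the endpoints $a,b$ and the wedge $w(g)$ exist at all; the paper gets this from the standing assumption that $P$ is not star-shaped, which gives $w(g)<2\pi$.
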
  
\begin{proof}
First assume that $w(g)<2\pi$. The lemma holds because otherwise, we could increase the covering angle of $g$ and extend  $\partial(g)$, contradicting our maximality assumption. This is depicted in Figure~\ref{reflex-boundary-fig-a}(a).

Now assume that  $w(g)=2\pi$. Then the two boundary rays of $w(g)$ are identical. Moreover, they go through at least one reflex vertex because $P$ is not star-shaped and $g$ does not cover the entire $\partial$. This is depicted in Figure~\ref{reflex-boundary-fig-a}(b).
\end{proof}
\begin{figure}[htb]
$\begin{tabular}{cc}
	\multicolumn{1}{m{.45\columnwidth}}{\centering\vspace{0pt}\includegraphics[width=.28\columnwidth]{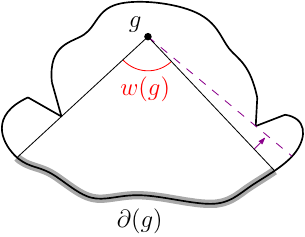}}
&\multicolumn{1}{m{.45\columnwidth}}{\centering\vspace{0pt}\includegraphics[width=.28\columnwidth]{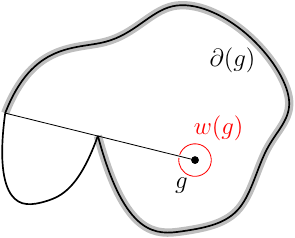}}
		\\
		(a)   &(b) 
\end{tabular}$
\caption{Illustration of the proof of Lemma~\ref{on-ray-lemma}.}
\label{reflex-boundary-fig-a}
\end{figure}

\subparagraph*{Algorithm {\normalfont (in a nutshell)}.}Our algorithm is fairly simple. It finds a polynomial-size set $S$ of starting points on $\partial$ such that at least one of the points is covered by two guards in some optimal solution. We execute the greedy algorithm for each point in $S$ and return the smallest guard set over all the points in $S$. 
From Corollary~\ref{optimal-cor}, it follows that the algorithm returns an optimal solution. The algorithm takes polynomial time because $S$ has a polynomial size, and the greedy algorithm runs in polynomial time.

\vspace{8pt}
It remains to compute $S$. This is the most crucial part of the algorithm. First, we find a polynomial-size point set $Q$ such that at least one guard in some optimal solution lies at a point of $Q$. For each point $q\in Q$, we compute a set $F(q)$ of all potential first endpoints of $\partial(q)$ for a guard at $q$. By Lemma~\ref{on-ray-lemma}, these points are the intersections of $\partial$ with the rays that start from $q$ and pass through reflex vertices of $P$. Thus, for every reflex vertex $r$ that is visible from $q$, we add the first intersection point of the ray $\overrightarrow{qr}$ with $\partial$ to $F(q)$. We then define the set $S$ as the union of $F(q)$ over all points $q\in Q$. The set $S$ has a polynomial size because $|Q|$ and the number of reflex vertices are bounded by polynomials.

Let $g^*$ be a guard in an optimal solution $\Gamma^*$ that lies at a point $q\in Q$. Then the first endpoint $f^*$ of $\partial(g^*)$ is in $F(q)$, and hence in $S$. By Observation~\ref{endpoint-obs}, $f^*$ is covered by $g^*$ and the previous guard in $\Gamma^*$. Therefore, $f^*$ is a point in $S$ with our desired property of being covered by two guards in an optimal solution.
\begin{figure}[!ht]
	\centering
\setlength{\tabcolsep}{0in}	\includegraphics[width=.43\columnwidth]{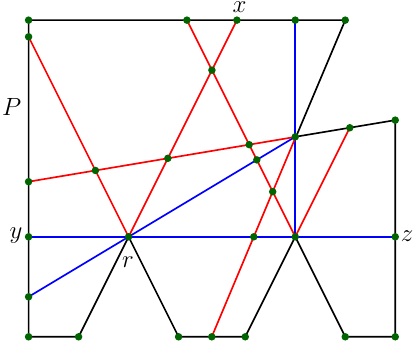}\\	\caption{Edge-extensions are  in red and vertex-extensions are in blue. The points in $Q$ are represented by small
    green disks.}
\label{Q-fig}
\end{figure}

Now it remains to compute $Q$.
For every reflex vertex $r$ of $P$, extend each edge incident to $r$ inside $P$ until hitting $\partial$ at some point $x$; see Figure~\ref{Q-fig}. We refer to the segment $rx$ as an {\em edge-extension}. Add a segment between every two reflex vertices of $P$ that are visible to each other and extend it from both endpoints inside $P$ until hitting $\partial$ at points $y$ and $z$. We refer to the segment $yz$ by {\em vertex-extension}. 
An {\em extension} is    either an edge-extension or a vertex-extension.
We define $Q$ as the set containing the following points

\begin{itemize}
    \item all vertices of $P$,
    \item all intersection points between extensions and $\partial$, 
    \item all intersection points between edge-extensions, 
    \item all intersection points between an edge-extension and a vertex-extension.
\end{itemize}

\noindent
The following structural lemma (proved in Section~\ref{proof-section}) shows that $Q$ has our desired property.

\begin{lemma}
  \label{structural-thr}
    Some optimal solution  has a guard at a point of $Q$.
\end{lemma}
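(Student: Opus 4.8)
The plan is to start from an arbitrary optimal solution $\Gamma^*$, which (by the standing conventions) we may take to be minimal and maximal, so that both Observation~\ref{endpoint-obs} and Lemma~\ref{on-ray-lemma} apply to every guard, and then to continuously relocate a single guard until it is forced onto a point of $Q$, all the while keeping the guarding feasible and of the same cardinality, hence still optimal. Fix a guard $g$ with previous guard $g^-$ and next guard $g^+$, and let $\delta=\partial(b^-,a^+)$ be the sub-arc that only $g$ covers, where $b^-$ is the last endpoint of $\partial(g^-)$ and $a^+$ is the first endpoint of $\partial(g^+)$. As long as $g$ stays in the covering region $C:=C(P,\delta)$, the arc $\delta$ stays covered and the rest of $\Gamma^*$ is untouched, so the solution remains optimal. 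By Lemma~\ref{path-covering}, $C$ is a simple polygon, and we are free to move $g$ anywhere in its closure.

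The pinning itself I would drive by the two grazing rays guaranteed by Lemma~\ref{on-ray-lemma}: each boundary ray of $w(g)$ grazes a reflex vertex, say $r_1$ and $r_2$. The idea is to slide $g$ so that each grazing ray becomes \emph{special}, meaning either collinear with an edge incident to its reflex vertex (so that $g$ lands on an edge-extension) or passing through a second reflex vertex (so that $g$ lands on a vertex-extension). Concretely, I would first rotate the ray through $r_2$ about $r_2$—equivalently, slide $g$ so its last endpoint sweeps—until this ray is special; this costs one degree of freedom and confines $g$ to one extension. I would then slide $g$ along that extension until the ray through $r_1$ becomes special as well, or until $g$ reaches a vertex of $P$ or meets $\partial$. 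At the stopping point $g$ lies on the intersection of two extensions, or on an extension and $\partial$, or at a vertex of $P$, all of which are points of $Q$ by construction. Equivalently, one may phrase this as pushing $g$ to a vertex $c$ of $C$: the edges of $\partial C$ are pieces of $\partial P$ together with \emph{windows}, and since $C=\bigcap_{w}V(P,w)$ over the vertices $w$ of $\delta$ (by the remark after Observation~\ref{edge-covering-endpoints-obs}), every window lies on a line $\overline{wr}$ through a reflex vertex $r$; when $w$ is also reflex this line is a vertex-extension, so the corresponding vertices of $C$ land in $Q$.

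The hard part will be the windows supported by a line $\overline{wr}$ in which $w$ is a \emph{convex} vertex of $\delta$ (or one of the arc endpoints $b^-,a^+$), since such a line is neither an edge- nor a vertex-extension, and a vertex of $C$ it creates need not lie in $Q$. The resolution I would pursue is that $g$ need not halt at such a vertex: $\delta$ is not rigid, because $b^-$ and $a^+$ are merely the endpoints of the neighboring coverages and may themselves be slid, the only invariant being that $\partial$ stays fully covered by the same number of guards. Thus, when a convex-vertex window would stop the slide prematurely, I would simultaneously nudge the coverage of $g^-$ or $g^+$ so as to relax that window while preserving total coverage, and continue until the genuinely binding constraints are the grazing reflex vertices $r_1,r_2$, which do yield extensions. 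This interplay between the movement of $g$ and the flexibility of its neighbors is exactly the nontrivial visibility bookkeeping that the authors flag as the technical heart of the proof, and it is where I expect the real difficulty to lie.

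Finally, I would package the argument as a terminating descent rather than a one-shot move: assign to each optimal, minimal, maximal solution a potential—for example, the number of guards not yet pinned to two extensions, together with a tie-breaking lexicographic order on coverage endpoints—and show that, whenever no guard sits at a point of $Q$, the moves above strictly decrease the potential without increasing the guard count. Since the potential is bounded below, the process halts at an optimal solution with a guard on $Q$, proving the lemma. The two steps demanding the most care are (a) verifying that each continuous move never destroys feasibility of the critical arc, so that $|\Gamma^*|$ never rises above $\OPT$, and (b) the convex-vertex window analysis above.
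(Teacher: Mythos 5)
Your strategy---freeze the rest of $\Gamma^*$, slide one guard $g$ inside $C(P,\delta)$, and pin it against two constraints from the arrangement defining $Q$---is genuinely different from the paper's, but it stalls exactly at the step you yourself flag as open, and the repair you sketch for that step does not work. While $g$ must keep covering $\delta$, every vertex of $\delta$ constrains $g$, and the binding walls of $C(P,\delta)$ can be windows supported by a line through a \emph{convex} vertex $w$ of $\delta$ and a reflex vertex; such a line is neither an edge-extension nor a vertex-extension, so a point of $C(P,\delta)$ where two such walls meet need not lie in $Q$. Your proposed fix is to relax the offending wall by sliding $b^-$ or $a^+$, i.e.\ by re-dividing coverage with the neighbors. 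But if $w$ lies strictly in the interior of $\delta$, no adjustment of the endpoints removes $w$ from the portion that $g$ alone must see, so the constraint $g\in V(P,w)$ is immovable and $g$ can be pinned at a point not in $Q$. Even when $w$ sits near an endpoint, shrinking $\delta$ there requires a neighbor to cover strictly more, which is impossible without moving the neighbor (its coverage is already maximal, by Lemma~\ref{on-ray-lemma}); moving the neighbor re-opens its own pinning constraints, and then your potential (``number of guards not yet pinned,'' even with tie-breaking) can increase rather than decrease, so termination is not established either. In short, the proposal is a plan whose crux is conceded rather than proved.

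It is worth seeing how the paper gets around precisely this obstruction, because it uses a different mechanism than local descent. Instead of constructively moving guards into $Q$, it assumes for contradiction that \emph{no} optimal solution meets $Q$ and takes an \emph{extremal} optimal solution $\Gamma^*$, maximizing the covering angles lexicographically. That extremal choice drives the structural lemmas (Lemmas~\ref{edge-lemma}--\ref{edge-out-lemma}): every guard lies on an edge of $\partial$ or on exactly one edge-extension whose edge belongs to its coverage, each wedge ray is bounded by exactly one reflex vertex from outside, and every covering angle is at least $\pi$---any violating configuration would admit a move that increases an angle without decreasing coverage, contradicting extremality or non-membership in $Q$. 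The inter-guard interaction that your local argument cannot control is then handled globally: consecutive coverages are normalized to meet in single points, and a contradiction is derived separately for $\OPT=2$ and for $\OPT\ge 3$, the latter via the nested regions $K_i$ that trap $g_m$ inside $K_1$ so it cannot cover a point near the far endpoint of $\partial(g_1)$. To complete your approach you would need substitutes for both ingredients: a device (like extremality) that rules out binding convex-vertex windows, and a genuine termination or contradiction argument replacing the potential descent.
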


If $P$ has $n$ vertices, then $Q$ has $O(n^3)$ points because there are $O(n)$ edge-extensions and $O(n^2)$ vertex-extensions. The set $S$ has size $O(n^4)$. 
The visibility polygon from a point can be computed in linear time \cite{Lee1983}, and the intersection of two polygons with $O(n)$ vertices can be found in $O(n\log n+k)$ time where $k$ is the size of the output  \cite{Balaban1995,Chazelle1992,Berg2008}.
For a given starting point $p_1$, the greedy algorithm finds $p_2,p_3,\cdots$ and $g_1,g_2,\cdots$ by computing the intersection of visibility polygons of $O(n)$ points ordered in counter-clockwise along $\partial$ (cf. Lemma~\ref{edge-covering-endpoints-obs}). The intersection of visibility polygons of consecutive points is a simple polygon (by Lemma~\ref{path-covering}) and it has $O(n)$ vertices because its reflex vertices coincide with reflex vertices of $P$. Thus the greedy algorithm takes $O(n^2\log n)$ time for a given starting point. Therefore, the total running time of the optimal algorithm, which invokes the greedy algorithm from each starting point in $S$, is $O(n^6\log n)$. This is a conservative upper bound on the running time and surely can be improved. 
For example, one might adopt ideas from the $O(n)$-time incremental algorithm for the kernel of a polygon \cite{Lee1979} to compute the covering region of a path on $\partial$ and consequently achieve a better running time for the greedy algorithm.
Our main goal here is a proof that the decision version of the contiguous boundary guarding problem is in the complexity class {\bf P}. The following theorem summarizes our result.

\begin{theorem}
The problem of contiguous boundary guarding of a polygon with the minimum number of guards can be solved in polynomial time. 
\end{theorem}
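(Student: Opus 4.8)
The plan is purely to assemble the machinery already developed in this section, since the final theorem is a summary statement rather than a fresh claim. First I would dispose of the trivial case: if $P$ is star-shaped a single greedy run returns one guard, which is optimal, so I would assume $P$ is not star-shaped and hence $\OPT\ge 2$. For the remaining case, Corollary~\ref{optimal-cor} tells us that to solve the problem in polynomial time it suffices to produce, in polynomial time, a polynomial-size set $S\subseteq\partial$ that is guaranteed to contain a point covered by two guards in some optimal solution; one then runs the greedy algorithm of Theorem~\ref{greedy-thr} from every point of $S$ and returns the smallest guard set found. Thus the proof reduces to three routine tasks: (i) bounding $|Q|$, $|S|$, and the time to construct them, (ii) certifying that $S$ has the two-guard property, and (iii) bounding the per-start cost of the greedy algorithm.

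For step (i) I would argue that $P$ has $O(n)$ reflex vertices, so there are $O(n)$ edge-extensions and $O(n^2)$ vertex-extensions, giving $O(n^2)$ extensions in total; their pairwise intersections and their intersections with $\partial$ number $O(n^4)$, so $|Q|=O(n^4)$ and $Q$ is computable in polynomial time. For each $q\in Q$, the set $F(q)$ contains at most one point per reflex vertex, so $|F(q)|=O(n)$ and $|S|=O(n^5)$. Each $F(q)$ is obtained from the visibility polygon of $q$ (computable in linear time), tracing rays through the visible reflex vertices, so $S$ is built in polynomial time.

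For step (ii), the certification is exactly the argument stated just before Lemma~\ref{structural-thr}: that lemma supplies an (assumed maximal) optimal solution $\Gamma^*$ with a guard $g^*$ located at some $q\in Q$; by the global maximality of guards together with Lemma~\ref{on-ray-lemma}, the first endpoint $f^*$ of $\partial(g^*)$ lies on a ray from $q$ through a reflex vertex, so $f^*\in F(q)\subseteq S$; and by Observation~\ref{endpoint-obs}, $f^*$ is covered both by $g^*$ and by the guard preceding $g^*$ in $\Gamma^*$. Hence $f^*$ is a point of $S$ covered by two guards in an optimal solution, so the greedy run started at $f^*$ returns a guard set of size exactly $\OPT$ by the second clause of Theorem~\ref{greedy-thr}, and the minimum over all starts in $S$ is therefore optimal. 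For step (iii), the greedy algorithm runs in polynomial time per start by Lemmas~\ref{path-covering} and~\ref{farthest-lemma}; combining this with $|S|=O(n^5)$ yields the $O(n^7\log n)$ bound quoted above.

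The only deep ingredient is Lemma~\ref{structural-thr}, which is deferred to Section~\ref{proof-section} and assumed here; granting it, no step of the theorem presents a genuine obstacle, and the one delicate point to get right is the appeal to the global maximality of guards that forces $f^*$ to land in $F(q)$ rather than merely somewhere on $\partial$. I would therefore flag explicitly that essentially all the difficulty of the overall result is concentrated in the structural lemma, whereas the theorem itself is a clean composition of the greedy guarantee with the structural guarantee.
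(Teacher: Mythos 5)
Your proposal is correct and follows essentially the same route as the paper: the theorem is obtained by combining Corollary~\ref{optimal-cor}, the construction of $Q$ and $S$ with the stated polynomial size bounds, Observation~\ref{endpoint-obs} and Lemma~\ref{on-ray-lemma} to place the first endpoint $f^*$ in $F(q)\subseteq S$, and the deferred structural Lemma~\ref{structural-thr}. You also correctly identify, as the paper does, that all of the genuine difficulty is concentrated in Lemma~\ref{structural-thr}, while the theorem itself is a clean composition of the greedy guarantee with the structural guarantee.
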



\subparagraph*{Remark.} One might wonder if Lemma~\ref{structural-thr} holds for every optimal solution, i.e., whether every optimal solution has a guard in $Q$. Also, one might wonder whether there always exists an optimal solution in which two guards cover the same vertex of the polygon. If this were true, then running the greedy algorithm from all vertices of the polygon would suffice. However, none of these statements are true as shown in Figure~\ref{not-in-Q-fig}. (For the second statement, one can verify that no guard can cover a red cross and a blue cross simultaneously. Therefore, any optimal solution must have a guard in the red triangle and a guard in the blue triangle. Then, the  contiguous coverages of such guards do not have any vertex in common.) 

\begin{figure}[!ht]
	\centering
\setlength{\tabcolsep}{0in}	\includegraphics[width=.45\columnwidth]{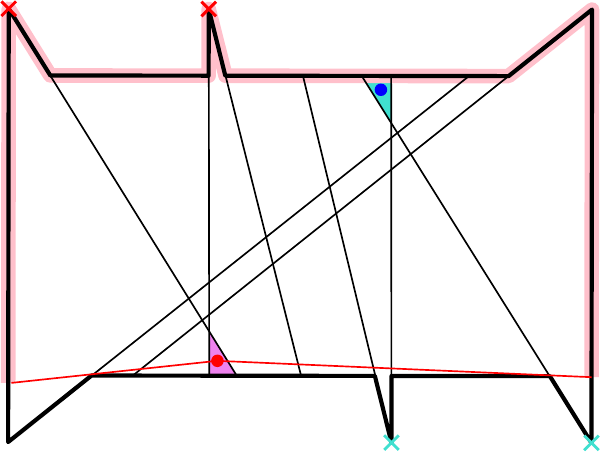}\\	\caption{An optimal solution, with two guards, none of which is in $Q$. The coverages of no two guards, in any optimal solution, share a vertex of the polygon.}
\label{not-in-Q-fig}
\end{figure}


\section{A Proof of Lemma~\ref{structural-thr}}
\label{proof-section}

We prove this lemma by contradiction. Assume that no optimal solution has a guard at a point of $Q$. We start with an overview of the proof.

\subparagraph*{Overview} We consider all optimal solutions where the coverage $\partial(g)$ of each guard $g$ is maximal, that is, we cannot extend $\partial(g)$ by increasing $w(g)$ or by moving $g$. Among such optimal solutions we choose $\Gamma^*$ as one in which the covering angle of each guard is maximal. Let $g_1,g_2,\dots,g_{m}$ be the guards in $\Gamma^*$ in this order along $\partial$. For each $g_i$ we show that $\pi\le w(g_i)<2\pi$ (Lemma~\ref{2pi-lemma} and Lemma~\ref{angle-lemma}) and $g_i$ does not lie on any vertex-extension (Lemma~\ref{no-vertex-extension}). Moreover we show that each $g_i$ lies on the extension of exactly one edge (Lemma~\ref{edge-lemma}), denoted $e(g_i)$, and that moving $g_i$ on the extension towards or away from $e(g_i)$ could make either endpoint of $\partial(g_i)$ shorter. We employ this last property and move the guards such that for each pair of  consecutive guards $g_i$ and $g_{i+1}$ the intersection of $\partial(g_i)$ and $\partial(g_{i+1})$ is exactly one point, except possibly for the pair $(g_m,g_1)$. The resulting guard set, denoted $\Gamma$, is an optimal solution and has the same coverage as $\Gamma^*$. In $\Gamma$ we could either move $g_1$ to lie on a point of $Q$ (when $\OPT=2$) or conclude that $\Gamma$, and hence $\Gamma^*$, do not cover the entire $\partial$ (when $\OPT>2$). Both cases lead to a contradiction.

\vspace{8pt}
Now we present the details of the proof. Let $\cal{G}$ be the set of all optimal guard sets for $P$ such that each guard has a maximal coverage. For a guard set $\Gamma\in\cal{G}$, we define the {\em angle  sequence}, denoted $\angseq{\Gamma}$, as the sequence of covering angles of the guards in $\Gamma$ sorted in decreasing order (the angles are measured within $[0,2\pi]$). Then, we define an order on the elements of $\cal{G}$ as follows: for $\Gamma_1,\Gamma_2\in\cal{G}$, we say that $\Gamma_1 \succ \Gamma_2$ if and only if $\angseq{\Gamma_1}\succ\angseq{\Gamma_2}$ in the lexicographic order. Let $\Gamma^*$ be a maximal element of $\cal{G}$ with respect to $\succ$.

We refer to the boundary rays of a wedge $w(g)$ simply by the {\em rays} of $w(g)$. We say that a ray is {\em bounded} by a reflex vertex if the ray passes through a reflex vertex. By Lemma~\ref{on-ray-lemma}, for each $g\in \Gamma^*$, each ray of $w(g)$ is bounded by some reflex vertex.

\begin{lemma}
\label{2pi-lemma}
    For any guard $g\in\Gamma^*$ in the interior of $P$ it holds that $w(g)<2\pi$.
\end{lemma}
\begin{proof}
For the sake of contradiction assume that $w(g)=2\pi$ for some guard $g$. Then the boundary rays of $w(g)$ are identical and go through at least one reflex vertex (by Lemma~\ref{on-ray-lemma}). If the rays go through two reflex vertices then by moving $g$ along the ray either $g$ reaches an edge-extension or a reflex vertex (without decreasing its coverage or its angle); this case is depicted in Figure~\ref{pi-fig}(a). Thus we get a new optimal solution that contradicts the fact that $g$ is not in $Q$. Assume that the rays go through exactly one reflex vertex $r$. If $g$ is not on any edge-extension we can rotate $g$ around $r$ and increase its coverage $\partial(g)$, contradicting our choice of $\Gamma^*$. If $g$ is on two edge-extensions then it is in $Q$, a contradiction. Assume that $g$ is on an extension of exactly one edge $e$, as in Figure~\ref{pi-fig}(b). Then by moving $g$ along the extension, towards or away from $e$, we increase the coverage of $g$, a contradiction. 
\end{proof}

\begin{figure}[htb]
	\centering
\setlength{\tabcolsep}{0in}
	$\begin{tabular}{cc}
	\multicolumn{1}{m{.5\columnwidth}}{\centering\vspace{0pt}\includegraphics[width=.32\columnwidth]{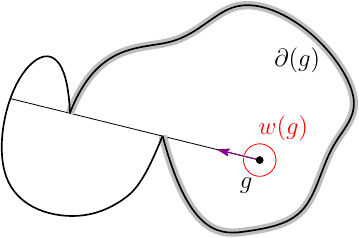}}
&\multicolumn{1}{m{.5\columnwidth}}{\centering\vspace{0pt}\includegraphics[width=.28\columnwidth]{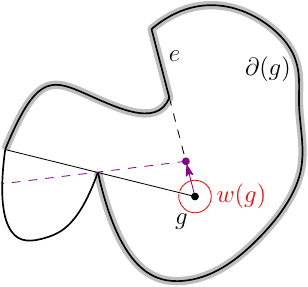}}
		\\
		(a)   &(b) 
	\end{tabular}$	
\caption{Illustration of the proof of Lemma~\ref{2pi-lemma}.}
\label{pi-fig}
\end{figure}

\begin{lemma}
\label{edge-lemma}
    Any guard $g\in\Gamma^*$ that is in the interior of $P$  lies on exactly one edge-extension. Moreover, the corresponding edge belongs to $\partial(g)$.
\end{lemma}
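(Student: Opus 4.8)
The plan is to pin down the position of $g$ by exploiting the lexicographic maximality of $\Gamma^*$ together with the fact that an interior guard has two translational degrees of freedom. By Lemma~\ref{on-ray-lemma}, the two boundary rays of $w(g)$ pass through reflex vertices, say $r_1$ toward the first endpoint $f$ of $\partial(g)$ and $r_2$ toward the last endpoint $\ell$, so the covering angle of $g$ equals $\angle r_1 g r_2$. I would establish the statement in two halves: first that $g$ lies on \emph{at least} one edge-extension whose defining edge is contained in $\partial(g)$, and then that it lies on \emph{at most} one.

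For the first half, the idea is a perturbation argument. Since $\Gamma^*$ is optimal and lexicographically largest in its covering angles, no small motion of $g$ (with all other guards frozen) can strictly increase $w(g)$ while keeping the guarding feasible: a small motion keeps the number of guards unchanged, keeps $g$ covering its private point $g_p$, and preserves the overlaps with the two neighbouring guards, so the only way such a motion can be forbidden is that it would make $g$ lose sight of part of $\partial(g)$. As a function of the position of $g$, the angle $\angle r_1 g r_2$ has no interior maximum (it strictly increases as $g$ approaches the chord $r_1 r_2$ when the wedge is convex, and strictly increases as $g$ recedes from it when the wedge is reflex). Hence the maximizing position is forced onto the boundary of the region in which $g$ still sees all of $\partial(g)$, i.e.\ onto $\partial C(P,\partial(g))$, which is a simple polygon by Lemma~\ref{path-covering}. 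I would then identify the binding piece of this boundary: to see the edge $e$ of $\partial(g)$ incident to $r_1$ (resp.\ $r_2$), the guard must lie on the interior side of the line through $e$, and the portion of that line lying inside $P$ beyond the reflex endpoint of $e$ is exactly the edge-extension of $e$. Thus the only obstruction consistent with maximality is that a ray of $w(g)$ becomes collinear with an edge $e\subseteq\partial(g)$ at its reflex vertex, which places $g$ on the edge-extension of $e$ and proves the ``at least one'' statement together with the ``moreover'' clause.

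For the second half, suppose $g$ lay on two distinct edge-extensions. Then $g$ would be a point common to two extensions, hence an intersection point of extensions, and by construction all such points belong to $Q$. This contradicts the standing assumption of the proof of Lemma~\ref{structural-thr} that no optimal solution has a guard at a point of $Q$. Therefore $g$ lies on exactly one edge-extension.

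The main obstacle I anticipate is the feasibility bookkeeping in the perturbation step: I must verify that a small motion of $g$ in the angle-increasing direction genuinely preserves a valid optimal guarding, which requires (i) correctly separating the convex from the reflex covering-angle case so as to move $g$ in the right direction, (ii) ruling out that the enlargement is instead blocked by a \emph{change of the bounding reflex vertex}, an event that would place $g$ on a vertex-extension rather than an edge-extension and so must be argued away using the $Q$-assumption, and (iii) checking that the endpoint whose edge binds does not simultaneously slide onto a polygon vertex. Carefully distinguishing these events---edge-collinearity versus vertex incidence versus blocker transition---and showing that only edge-collinearity can survive as the lexicographically maximal obstruction is the crux of the argument.
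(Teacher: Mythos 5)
Your overall skeleton is the same as the paper's: the ``at most one'' half is exactly the paper's opening observation (a point on two edge-extensions is an intersection of extensions, hence in $Q$), and the ``at least one, with $e\subseteq\partial(g)$'' half is, as in the paper, a perturbation argument against the lexicographic maximality of $\Gamma^*$ (the paper simply moves $g$ along the bisector of $w(g)$, and handles the ``moreover'' clause by a slight move toward the bisector). The genuine problem lies in the step you yourself call the crux: disposing of obstructions that are \emph{not} edge-extensions of edges of $\partial(g)$. Your plan to eliminate a blocker transition---a second reflex vertex arriving on a ray of $w(g)$, i.e.\ $g$ lying on a vertex-extension---``using the $Q$-assumption'' does not work. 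The set $Q$ contains only vertices of $P$, intersections of extensions with $\partial$, and intersections of two extensions; a guard in the relative interior of a \emph{single} vertex-extension is not in $Q$. In the paper, the statement that no guard of $\Gamma^*$ lies on a vertex-extension is precisely Lemma~\ref{no-vertex-extension}, and for interior guards its proof \emph{invokes} Lemma~\ref{edge-lemma}; importing it into the proof of Lemma~\ref{edge-lemma} is circular.

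A second, related flaw is your identification of the binding pieces of $\partial C(P,\partial(g))$. You only consider constraints from edges of $\partial(g)$ incident to $r_1$ or $r_2$, which tacitly assumes those bounding reflex vertices bound their rays from inside (i.e., have incident edges inside $w(g)$). But the configuration the lemma chiefly has to handle (see Figure~\ref{reflex-boundary-fig}, and compare Lemma~\ref{reflex-inside-lemma}, which---again using Lemma~\ref{edge-lemma}---shows bounding is from outside) is one where no edge of $\partial(g)$ is incident to $r_1$ or $r_2$; there the binding edge-extension comes from an edge in the \emph{middle} of $\partial(g)$ that $g$ sees edge-on. Moreover, $\partial C(P,\partial(g))$ also contains window pieces lying on a ray through a single reflex vertex toward a convex vertex (or toward an endpoint) of $\partial(g)$; such a window is neither an edge-extension nor a vertex-extension, so neither $Q$ nor your classification says anything about it. To complete the proof one needs a geometric argument that these windows cannot be the binding obstruction for an angle-increasing motion---this is what the paper compresses into the assertion that moving $g$ along the bisector of $w(g)$ increases $w(g)$ ``without decreasing its coverage''---together with the residual case, treated in the paper's last sentence, in which $g$ lands on an edge-extension whose edge is \emph{not} in $\partial(g)$ and a further small move toward the bisector yields the contradiction.
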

\begin{proof}
    First, observe that $g$ cannot be on two edge-extensions because otherwise $g$ is in $Q$, a contradiction.
    If $g$ is not on any edge-extension in $\partial(g)$ as in Figure~\ref{reflex-boundary-fig}, then by moving $g$ along the bisector of $w(g)$ one can increase $w(g)$ without decreasing its coverage $\partial(g)$, a contradiction to our choice of $\Gamma^*$---this movement is possible as $g$ is in the interior of $P$ and $w(g)<2\pi$ by Lemma~\ref{2pi-lemma}. Thus $g$ lies on the extension of exactly one edge, say $e$. If $e\notin \partial(g)$ then we can get a similar contradiction by slightly moving $g$ towards the bisector of $w(g)$.
\end{proof}

\begin{figure}[htb]
	\centering
\includegraphics[width=.3\columnwidth]{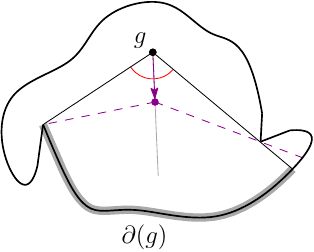}
\caption{Illustration of the proof of Lemma~\ref{edge-lemma}.}
\label{reflex-boundary-fig}
\end{figure}

\begin{lemma}
    \label{no-vertex-extension}
    No guard $g\in \Gamma^*$ lies on a vertex-extension.
\end{lemma}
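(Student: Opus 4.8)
The plan is to argue by contradiction, in the same spirit as the proof of Lemma~\ref{edge-lemma}, exploiting our choice of $\Gamma^*$ as the optimal solution whose covering angles are lexicographically maximum. Suppose some $g\in\Gamma^*$ lies on a vertex-extension $v=yz$ determined by two mutually visible reflex vertices $r_1,r_2$, and let $\ell$ be the line through $r_1$ and $r_2$. First I would dispose of the possibility that $g\in\partial$: every point at which $v$ meets $\partial$ is either an endpoint of the extension ($y$ or $z$) or a vertex of $P$ (such as $r_1,r_2$, or a vertex at which $\ell$ grazes $\partial$), and all of these lie in $Q$. Hence a boundary guard on $v$ would already be in $Q$, contradicting our standing assumption. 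So $g$ is interior, and Lemma~\ref{edge-lemma} applies: $g$ lies on a unique edge-extension, its supporting line $\ell_e$ is the line of some edge $e$ with $e\in\partial(g)$, and $g$ sees $e$ edge-on along $\ell_e$.

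The easy case is $\ell_e\neq\ell$. Then the edge-extension and the vertex-extension $v$ are two distinct extensions that meet at the interior point $g$; since their supporting lines differ they cross transversally there, so $g$ is an intersection point of two extensions and therefore $g\in Q$, a contradiction. The entire difficulty is thus concentrated in the degenerate case $\ell_e=\ell$, i.e.\ the covered edge $e$ is collinear with $r_1$ and $r_2$ (for instance when $e=r_1r_2$ is an edge joining two adjacent reflex vertices). Here the edge-extension and the vertex-extension share the line $\ell$, so they do not cross transversally and the preceding argument fails.

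To settle the collinear case I would combine the lexicographic maximality of $\Gamma^*$ with a one-parameter motion of $g$ along $\ell$. Because $e\in\partial(g)$ lies on $\ell$, sliding $g$ along $\ell$ keeps $e$ covered edge-on and keeps the corresponding ray of $w(g)$ fixed along $\ell$; by Lemma~\ref{on-ray-lemma} the opposite ray of $w(g)$ is pinned through some reflex vertex $r_3$ and merely pivots as $g$ moves. I expect to show that one of the two slide directions strictly increases the angle $w(g)$ while only enlarging $\partial(g)$, so the perturbed guard still covers its chain and $\Gamma^*$ remains an optimal guarding of $\partial$; this contradicts the maximality of the covering angles. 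The remaining possibility is that $g$ cannot move along $\ell$ in either direction without losing coverage, in which case $g$ is simultaneously pinned by a second feature — a window through another reflex vertex, or a reflex vertex lying on $\ell$ — and the stopping point is again a point of $Q$ (an extension–extension intersection, or a vertex of $P$), once more contradicting the assumption.

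I anticipate this collinear analysis to be the main obstacle. It requires a careful local description of the polygon near $r_1$ and $r_2$ to verify two things: that the motion of $g$ along $\ell$ genuinely preserves complete visibility of $\partial(g)$ (so that moving $g$ does not silently drop part of the covered chain past the grazing reflex vertices), and that whenever the motion is blocked, the blocking point is one of the $Q$-type features rather than a spurious window emanating from a convex vertex of $\partial(g)$. Getting these two geometric claims right is where the real work lies; the two non-degenerate cases reduce immediately to membership in $Q$.
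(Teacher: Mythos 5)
Your first two cases are exactly the paper's proof: a guard on $\partial$ that lies on a vertex-extension is an intersection point of an extension with $\partial$, hence in $Q$; an interior guard lies on an edge-extension by Lemma~\ref{edge-lemma}, so lying additionally on a vertex-extension makes it a point common to two extensions, hence in $Q$. Where you diverge is in insisting that this second step needs the two supporting lines to differ. The paper's definition of $Q$ (``all intersection points between extensions themselves'') carries no transversality requirement, and its proof implicitly treats \emph{any} point lying on two distinct extensions as a member of $Q$; under that reading your collinear case is subsumed and Lemma~\ref{no-vertex-extension} is already proved. What the collinear configuration genuinely endangers is not this lemma but the claimed $O(n^4)$ size of $Q$: for an edge $e=r_1r_2$ whose two endpoints are both reflex (a perfectly generic situation), the vertex-extension of the pair $(r_1,r_2)$ \emph{contains} both edge-extensions of $e$, so under the inclusive reading the two extensions share a whole segment of ``intersection points.'' That is a legitimate criticism of the paper's complexity analysis, but it is a different complaint; and restoring finiteness of $Q$ (e.g., counting only crossing points) is precisely what re-opens the gap you then try to close.

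Judged on its own terms, your handling of that re-opened case is a plan rather than a proof, and its central dichotomy is not exhaustive. You assume one ray of $w(g)$ lies along $\ell$ and stays there during the slide; this is in fact true (a guard collinear with $e$ cannot see past a reflex endpoint of $e$ that it views edge-on), but it is one of the facts you defer rather than establish. More seriously, you claim that either some slide direction strictly increases $w(g)$ while enlarging $\partial(g)$, or else $g$ cannot slide without losing coverage and is pinned at a $Q$-point. Consider the sub-case $\partial(g)=e$: then $w(g)=0$, and sliding $g$ along the edge-extension keeps $\partial(g)=e$ and $w(g)=0$, so the angle never increases, yet the motion is never blocked either---neither branch of your dichotomy applies. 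In that sub-case the contradiction must come from running the slide to its end: $g$ reaches either the reflex vertex $r_2$ (a vertex of $P$, hence in $Q$) or the far endpoint of the edge-extension (a point of extension $\cap\,\partial$, hence in $Q$), while one verifies that $g$ keeps covering $e$ throughout so that the guard set remains an optimal cover; this is the style of movement argument the paper uses later for $\Gamma$, and it, not a lexicographic-improvement step, is what closes this sub-case. Since you explicitly leave these verifications open (``I expect to show,'' ``where the real work lies''), the proposal as written does not yet prove the lemma.
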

\begin{proof}
If $g\in \partial$, then $g$ cannot be on a vertex-extension because otherwise, it would be in $Q$, a contradiction.
If $g\notin \partial$, then $g$ is in the interior of $P$. In this case, $g$ is on an edge-extension, by Lemma~\ref{edge-lemma}, and thus it cannot be on a vertex-extension as otherwise it would be in $Q$. 
\end{proof}

A guard $g\in\Gamma^*$ cannot lie on both $\partial$ and an edge-extension because otherwise it would be in $Q$, a contradiction. 
Moreover, a guard cannot be on a vertex of the polygon. This and Lemma~\ref{edge-lemma} imply that every guard $g$ lies either on exactly one edge of $\partial$ or on exactly one edge-extension but not both. For both cases, we denote the corresponding edge by $e(g)$. (In the case where $g$ lies on an edge-extension, $e(g)$ is the edge of $P$ that defines the extension.) 


Let $r$ be a reflex vertex that bounds a ray of $w(g)$; such a reflex vertex exists by Lemma~\ref{on-ray-lemma}. We say that  $r$ bounds the ray from {\em outside} if both edges incident to $r$ lie outside $w(g)$; otherwise (at least one edge lies inside $w(g)$), we say that $r$ bounds the ray from {\em inside}.

\begin{figure}[htb]
	\centering
\setlength{\tabcolsep}{0in}
	$\begin{tabular}{cc}
	\multicolumn{1}{m{.5\columnwidth}}{\centering\vspace{0pt}\includegraphics[width=.4\columnwidth]{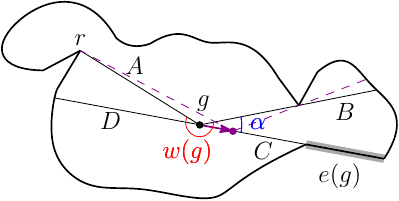}}
&\multicolumn{1}{m{.5\columnwidth}}{\centering\vspace{0pt}\includegraphics[width=.32\columnwidth]{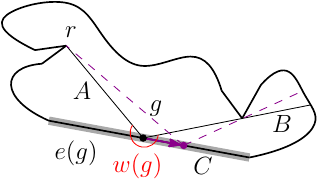}}
		\\
		(a)   &(b) 
	\end{tabular}$	
\caption{Illustration of the proof of Lemma~\ref{reflex-inside-lemma}.}
\label{bounded-inside-fig}
\end{figure}

\begin{lemma}
\label{reflex-inside-lemma}
For any guard $g\in\Gamma^*$,  each ray of $w(g)$ is bounded by exactly one reflex vertex of $P$. Such a vertex bounds the ray from outside.
\end{lemma}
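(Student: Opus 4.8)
The statement bundles two claims: that each ray of $w(g)$ meets exactly one reflex vertex, and that this vertex bounds the ray from outside. The plan is to prove both by contradiction, exploiting the structural facts already available for $\Gamma^*$: each ray meets \emph{at least} one reflex vertex (Lemma~\ref{on-ray-lemma}) and no guard lies on a vertex-extension (Lemma~\ref{no-vertex-extension}), together with the extremal choice of $\Gamma^*$ (optimal with lexicographically maximum covering angles) and the maximality of each $\partial(g)$.

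For the uniqueness part, I would suppose some ray $\rho$ of $w(g)$ passes through two distinct reflex vertices $r_1$ and $r_2$, and let $f$ be the endpoint of $\partial(g)$ lying on $\rho$. Since $g$ covers $f$, the segment from $g$ to $f$ lies in $P$, and the reflex vertices that $\rho$ grazes on its way to $f$ lie on this segment; hence $r_1$ and $r_2$ both lie on $gf\subseteq P$ and are therefore mutually visible. Consequently the line through $r_1$ and $r_2$ carries a vertex-extension, namely the maximal chord of $P$ along that line containing $r_1r_2$. As $g$ lies on this line and inside $P$, it lies on that vertex-extension, contradicting Lemma~\ref{no-vertex-extension}. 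Combined with Lemma~\ref{on-ray-lemma}, this shows each ray meets exactly one reflex vertex.

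For the second part, I would fix a ray $\rho$ of $w(g)$, its unique reflex vertex $r$, and the endpoint $f$ of $\partial(g)$ on $\rho$, treating the last ray (the first being symmetric). Suppose toward a contradiction that $r$ bounds $\rho$ from inside, i.e.\ some edge $e'$ incident to $r$ lies inside $w(g)$. Since $r$ sits on the boundary ray $\rho$ and $e'$ enters the wedge, $g$ sees a sub-segment of $e'$ just off $\rho$, and I would use this to rotate $\rho$ outward about $g$ (equivalently, to extend the covered arc past $f$) while still seeing all of $\partial(g)$. This strictly increases the covering angle of $g$ without uncovering any point of $\partial$ and without changing the number of guards, contradicting either the maximality of $\partial(g)$ or the lexicographically maximal choice of $\Gamma^*$. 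It then follows that both edges at $r$ lie outside $w(g)$, so $r$ bounds $\rho$ from outside.

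The hard part will be this perturbation in the second claim. One must verify carefully, using that $r$ is reflex and that $e'$ enters the wedge, that the outward rotation of $\rho$ loses no already-covered portion of $\partial$ and genuinely opens the angle; the delicate configuration is the degenerate one in which $\partial(g)$ runs straight into $r$ as its endpoint ($f=r$), where the incident edge along the covered chain naively looks ``inside''. There I would lean on the reflexivity of $r$ and on $P$ not being star-shaped to exhibit a point of $\partial$ just beyond $r$ that $g$ already sees, which simultaneously contradicts the maximality of $\partial(g)$ and supplies the required angle increase.
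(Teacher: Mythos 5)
Your uniqueness argument is sound and matches what the paper intends: two reflex vertices on one ray would both lie on the segment from $g$ to the coverage endpoint on that ray, hence be mutually visible, so $g$ would lie on the resulting vertex-extension, contradicting Lemma~\ref{no-vertex-extension}. (The paper dispatches this part as a ``direct implication'' of Lemmas~\ref{on-ray-lemma} and~\ref{no-vertex-extension}; you supply essentially those details.)

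The second part, however, has a genuine gap: the inference ``since $e'$ enters the wedge, $g$ sees a sub-segment of $e'$ just off $\rho$'' is false, and it fails precisely in the configuration the lemma must eliminate. Picture a thin fin protruding into the interior of $P$ whose tip $r$ is a reflex vertex (interior angle close to $2\pi$), with $g$ positioned so that its coverage runs along the near side of the fin and ends at $r$. The far side of the fin is an edge incident to $r$ lying inside $w(g)$, yet it is entirely invisible to $g$, because $g$ faces its back; no point of $\partial$ beyond $r$ is visible at all. So the coverage genuinely cannot be extended past $r$, and there is no outward rotation to perform ($w(g)$ is determined by the endpoints of $\partial(g)$; it is not a free parameter). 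Consequently the bounded-from-inside configuration is perfectly consistent with maximality of $\partial(g)$ and with the lexicographically maximal choice of $\Gamma^*$, and no contradiction can be extracted from those assumptions alone, which is all your argument invokes. The paper's proof uses a different mechanism that crucially relies on the standing hypothesis of Section~\ref{proof-section} (assumed for contradiction) that no optimal solution has a guard at a point of $Q$: when $r$ bounds a ray from inside, $r$ itself is the endpoint of $\partial(g)$, and by uniqueness it is the only reflex vertex on that ray, so $r$ remains visible while $g$ is \emph{translated} along $e(g)$ or along the extension of $e(g)$ (Lemma~\ref{edge-lemma} supplies this line, and the direction of motion is chosen so that the angle to the other ray is at most $\pi$, keeping the other endpoint covered). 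The translation preserves the coverage, hence optimality, and must terminate when $g$ reaches a vertex-extension, another edge-extension, an endpoint of $e(g)$, or $\partial$ --- in every case a point of $Q$ --- contradicting that hypothesis. Only the subcase $g\in\partial$ with $w(g)\le\pi$ is handled by the bisector/lexicographic argument of the kind you propose. Since your outline never uses the no-guard-in-$Q$ assumption in the second part, it cannot be repaired along the lines you describe.
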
 
\begin{proof}
This first statement is a direct implication of Lemma~\ref{on-ray-lemma} and Lemma~\ref{no-vertex-extension}.

We prove the second statement by   contradiction. Assume that a reflex vertex $r$ bounds a ray $A$ of $w(g)$ from inside. Then $r$ is an endpoint of $\partial(g)$. Let $B$ denote the other ray of $w(g)$, which is bounded from inside or outside as in Figure~\ref{bounded-inside-fig}.
We consider two cases:

\begin{itemize}
    \item $g\notin \partial$. Then by Lemma~\ref{edge-lemma}, $g$ lies on the extension of $e(g)$, and moreover $e(g)$ belongs to $\partial (g)$. 
Let $C$ and $D$ be the rays from $g$ towards and away from $e(g)$, respectively, as in Figure~\ref{bounded-inside-fig}(a).  {\color{mycolor} Let  $\alpha$ be the angle between $B$ and $C$ that lies inside $w(g)$. If $\alpha$} is at most $\pi$, then by moving $g$ on $C$  either $g$ reaches a vertex-extension, or another edge-extension, or an endpoint of $e(g)$. This movement does not decrease $\partial(g)$ because the endpoint $r$ of $\partial(g)$ remains visible from $g$ as there is no reflex vertex in the interior of $A$, and the other endpoint of $\partial(g)$, which is on $B$, remains visible since the movement angle  $\alpha$ is at most $\pi$. In all cases we get a new optimal solution that contradicts the fact that $g$ is not in $Q$. Assume that $\alpha> \pi$ {\color{mycolor}(this could happen for example if in Figure~\ref{bounded-inside-fig}(a) the edge $e(g)$ and consequently the rays $C$ and $D$ were reflected around $g$).} By moving $g$ on $D$, either $g$ lies on a vertex-extension, or on another edge-extension, or it reaches $\partial$. In all cases, we get an optimal solution for which $g$ is in $Q$, which is a contradiction.

 \item $g\in \partial$. If $w(g)\leq \pi$ then by moving $g$ along the bisector of $w(g)$ we increase the covering angle of $g$ without decreasing $\partial(g)$, a contradiction to our choice of $\Gamma^*$. Assume that $w(g)> \pi$. Let $C$ be the ray from $g$ along  $e(g)$ such that the angle between $B$ and $C$ inside $w(g)$ is at most $\pi$ as in Figure~\ref{bounded-inside-fig}(b). By moving $g$ on $e(g)$ along $C$, either $g$ lies on a vertex-extension, or on another edge-extension, or it reaches an endpoint of $e(g)$. As in the previous case, this movement does not decrease $\partial(g)$. In all cases, we get an optimal solution for which $g$ is in $Q$, a contradiction. 
\qedhere
\end{itemize}
\end{proof}

\begin{lemma}
\label{angle-lemma}
For every guard $g\in\Gamma^*$, it holds that $w(g)\ge \pi$. 
\end{lemma}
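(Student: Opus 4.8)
The plan is to argue by contradiction: suppose some $g\in\Gamma^*$ has $w(g)<\pi$, and show that $g$ can be perturbed so as to strictly increase its covering angle while still covering the same assigned arc $\partial(g)$, thereby contradicting the lexicographic maximality of $\Gamma^*$. First I would fix notation, letting $f$ and $h$ be the first and last endpoints of $\partial(g)$, so that $w(g)=\angle fgh<\pi$. The geometric engine is the inscribed-angle fact that the locus of apexes seeing the fixed chord $fh$ at a given angle is a circular arc through $f$ and $h$, and that, precisely because the angle is less than $\pi$, moving the apex towards the segment $fh$ strictly increases the subtended angle. I would therefore displace $g$ by an infinitesimal amount in a direction with a positive component towards $fh$ (for instance along the bisector of $\angle fgh$, which points towards the chord), chosen generically so as not to cross any edge- or vertex-extension.

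Two things then need checking for this small move. The first is that $\partial(g)$ remains covered: since by assumption $g\notin Q$, $g$ lies on no vertex-extension (Lemma~\ref{no-vertex-extension}) and on at most its own edge-extension (Lemma~\ref{edge-lemma}), the set of boundary points visible from $g$ is locally constant under a generic small displacement, so $g$ still sees every point of $\partial(g)$; combined with the widened angle, the wedge through $f$ and $h$ from the new apex still contains and covers the same arc. Here is exactly where Lemma~\ref{reflex-inside-lemma} is used: both rays of $w(g)$ being bounded from outside is what makes the coverage stable rather than grazing. The second is that the move stays inside $P$; for an interior guard the displacement is free, and for a boundary guard the wedge opens into the interior of $P$, so the bisector direction has an inward component and the move is feasible. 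In all cases this yields an optimal guarding with the same guard-to-arc assignment but a strictly larger $w(g)$, hence a strictly larger lexicographic sequence of covering angles, contradicting the choice of $\Gamma^*$. We conclude $w(g)\ge\pi$.

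I expect the main obstacle to be the boundary-guard degenerate sub-case in which the bisector of $\angle fgh$ is tangent to $\partial$ — concretely, when $g\in\partial$ and $e(g)$ bisects the wedge — so that the angle-increasing direction points \emph{along} $\partial$ rather than strictly into the interior of $P$. In that situation the clean perturbation argument fails, and I would instead mimic the movement analysis of Lemma~\ref{reflex-inside-lemma}: slide $g$ along $e(g)$ and argue that either the covering angle strictly increases (contradicting lexicographic maximality) or $g$ reaches a polygon vertex, another edge-extension, or a vertex-extension, placing $g\in Q$ and contradicting our standing assumption. Making this case split exhaustive — and verifying that some admissible slide direction always increases the angle while keeping $\partial(g)$ visible — is the delicate part of the proof, even though each individual step is elementary.
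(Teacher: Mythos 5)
Your overall strategy---perturb $g$ so that $w(g)$ strictly increases while $\partial(g)$ stays covered, contradicting the lexicographic maximality of $\Gamma^*$---is the paper's strategy, and your treatment of the case $g\in\partial$ (move along the bisector, which points into the interior) is exactly what the paper does there. The gap is in the interior-guard case. By Lemma~\ref{edge-lemma}, a guard $g$ in the interior of $P$ lies on the extension of $e(g)$, and $e(g)\in\partial(g)$; hence $g$ covers $e(g)$ only \emph{edge-on}, i.e., collinearly along the supporting line $L$ of $e(g)$. This is a degenerate visibility position, and your key claim that ``the set of boundary points visible from $g$ is locally constant under a generic small displacement'' fails precisely here: any displacement of $g$ off $L$ into the outer halfplane of $e(g)$ (the side away from the interior of $P$ adjacent to $e(g)$) makes the relative interior of $e(g)$ invisible, no matter how small the step and how generically the direction is chosen. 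Genericity with respect to crossing \emph{other} extensions is beside the point; the problem is leaving the one extension that $g$ is forced to lie on.

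Moreover, nothing available at this stage prevents the bisector of $w(g)$ from pointing into that outer halfplane: the wedge of an interior guard contains the ray of $L$ towards $e(g)$, but $\partial(g)$ may also contain boundary points on the outer side of $L$, and if more than half of the wedge's angular extent lies on that side, so does the bisector. (Lemma~\ref{edge-out-lemma}, which restricts where $e(g)$ sits inside $w(g)$, is proved \emph{after} and \emph{using} Lemma~\ref{angle-lemma}, so you cannot invoke it here.) Thus your perturbation may destroy coverage of $e(g)\subseteq\partial(g)$ and yields no contradiction. The paper avoids this by choosing a different direction in this case: it slides $g$ \emph{along} the extension of $e(g)$, towards $e(g)$. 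Staying on $L$ preserves the edge-on coverage of $e(g)$; the endpoints of $\partial(g)$ stay covered because their reflex vertices bound the rays from outside (Lemma~\ref{reflex-inside-lemma}); and the angle still strictly increases because the direction towards $e(g)$ lies in the wedge and $w(g)<\pi$. So your proof can be repaired by replacing the bisector with this constrained direction whenever $g\notin\partial$; the bisector argument is sound only for guards on $\partial$ (where, incidentally, the degenerate ``tangent bisector'' sub-case you worry about cannot occur when $w(g)<\pi$, since the wedge lies in the closed inner halfplane of $e(g)$ and a wedge symmetric about $L$ would need a ray on the outer side).
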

\begin{proof}
First, assume that $g\notin e(g)$, and hence, it lies in the interior of $P$.
By Lemma~\ref{reflex-inside-lemma}, each ray of $w(g)$ is bounded by exactly one reflex vertex from outside.
If $w(g)< \pi$ then by  moving $g$ inside $w(g)$, along the extension of $e(g)$, we increase $w(g)$ without decreasing $\partial(g)$ as in Figure~\ref{edge-fig}(a). This contradicts our choice of $\Gamma^*$.

Now assume that $g\in e(g)$. If $w(g)< \pi$ then by moving $g$ along the bisector of $w(g)$ we increase its covering angle without decreasing $\partial(g)$, a contradiction to our choice of $\Gamma^*$.  
\end{proof}

\begin{figure}[htb]
	\centering
\setlength{\tabcolsep}{0in}
	$\begin{tabular}{cc}
	\multicolumn{1}{m{.5\columnwidth}}{\centering\vspace{0pt}\includegraphics[width=.32\columnwidth]{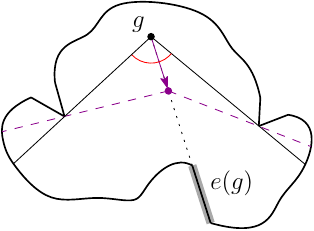}}
	&\multicolumn{1}{m{.5\columnwidth}}{\centering\vspace{0pt}\includegraphics[width=.37\columnwidth]{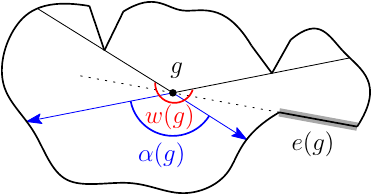}}
    \\
		(a)   &(b) 
	\end{tabular}$	
\caption{Illustration of (a) the proof of Lemma~\ref{angle-lemma}, and (b) the statement of Lemma~\ref{edge-out-lemma}.}
\label{edge-fig}
\end{figure}

Let $\alpha(g)$ be the wedge with apex $g$ that is obtained by extending boundary rays of $w(g)$ as in Figure~\ref{edge-fig}(b). Since $w(g)\ge \pi$, the wedge  $\alpha(g)$ lies inside $w(g)$. Moreover, $\alpha(g)\le \pi$. 

\begin{lemma}
\label{edge-out-lemma}
For any guard $g\in\Gamma^*$, the edge $e(g)$ is inside $w(g)$ but outside $\alpha(g)$.
\end{lemma}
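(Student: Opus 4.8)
The plan is to reduce both assertions to a single statement about the directions seen from $g$, and then to dispatch the two configurations ($g\in\partial$ and $g$ in the interior) separately, with the interior case requiring a perturbation argument in the spirit of Lemmas~\ref{reflex-inside-lemma} and~\ref{angle-lemma}. Fix $g\in\Gamma^*$, place its apex at the origin, and let $\theta=w(g)\ge\pi$ (Lemma~\ref{angle-lemma}) be the covering angle, with the two rays of $w(g)$ pointing in directions $0$ and $\theta$. A short computation shows that, because $\theta\ge\pi$, the wedge $\alpha(g)$ occupies exactly the directions $(\theta-\pi,\pi)$ and has angle $2\pi-\theta\le\pi$; consequently a direction $d\in[0,\theta]$ lies \emph{outside} $\alpha(g)$ if and only if the opposite direction $d+\pi$ again lies in $[0,\theta]$. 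Since $g$ lies on the supporting line of $e(g)$ (on $e(g)$ itself or on its edge-extension), the edge and its backward continuation occupy two opposite directions $\psi$ and $\psi+\pi$ from $g$. Hence the whole lemma is equivalent to the single claim that \emph{both} $\psi$ and $\psi+\pi$ lie in $[0,\theta]$, i.e.\ that the entire chord of $P$ through $g$ along the line of $e(g)$ lies in the closed wedge $\overline{w(g)}$.

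First I would treat the easy case $g\in\partial$. Then $g$ is an interior point of the edge $e(g)$ (a guard is never a vertex), so $\partial$ passes straight through $g$ along $e(g)$, and the two boundary pieces abutting $g$ on either side are trivially visible from $g$; by maximality they belong to $\partial(g)$, which by definition lies in $\overline{w(g)}$. Thus both directions $\psi$ and $\psi+\pi$ lie in $[0,\theta]$, and the reformulated claim holds at once, yielding both $e(g)\subseteq w(g)$ and $e(g)$ outside $\alpha(g)$.

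The substance is the interior case. Here Lemma~\ref{edge-lemma} gives $e(g)\subseteq\partial(g)\subseteq\overline{w(g)}$, so $\psi\in[0,\theta]$ and $e(g)$ is inside $w(g)$; it remains to prove $\psi+\pi\in[0,\theta]$, equivalently that $e(g)$ is not strictly inside $\alpha(g)$. Suppose, for contradiction, that $\psi\in(\theta-\pi,\pi)$. Then $\psi$ lies strictly between the two rays of $w(g)$ and makes an angle strictly less than $\pi$ with each of them (the angle to the ray at $0$ is $\psi<\pi$, and to the ray at $\theta$ is $\theta-\psi<\pi$, using $\psi>\theta-\pi$). I would then slide $g$ an infinitesimal distance along the line of $e(g)$ in the direction $\psi$, i.e.\ toward $e(g)$, keeping $g$ on the same edge-extension. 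Because the two bounding vertices are reflex and bound the rays \emph{from outside} (Lemma~\ref{reflex-inside-lemma}), and because the displacement direction makes an angle below $\pi$ with both rays, this move keeps $g$ inside the covering region $C(P,\partial(g))$ (a simple polygon by Lemma~\ref{path-covering}) and hence preserves coverage of the whole arc $\partial(g)$; the solution stays optimal. On the other hand, displacing the apex toward a direction strictly between the two endpoints of the arc strictly increases the angle they subtend, so $w(g)$ strictly increases. This contradicts the choice of $\Gamma^*$ as the optimal solution of maximum lexicographic covering angles; and should the perturbed apex happen to land on a vertex or on another extension, it would lie in $Q$, contradicting instead the standing assumption that no optimal guard lies in $Q$. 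Either way we reach a contradiction, so $\psi+\pi\in[0,\theta]$ and $e(g)$ lies outside $\alpha(g)$.

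The crux, and the step I expect to be most delicate, is justifying that the infinitesimal slide preserves visibility of the \emph{entire} arc $\partial(g)$, not merely of its two endpoints: one must argue that $g$ sits appropriately inside $C(P,\partial(g))$ with respect to the chosen direction, ruling out an interior reflex vertex of $\partial(g)$ that is only tangentially visible and could begin to occlude part of the arc. This is exactly the visibility-preservation bookkeeping already carried out in the proofs of Lemmas~\ref{reflex-inside-lemma} and~\ref{angle-lemma}, and the angle-below-$\pi$ condition on the displacement is precisely what makes it go through.
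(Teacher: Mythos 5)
Your proposal follows the paper's proof in its essentials: the same split into the cases $g\in e(g)$ and $g\notin e(g)$, and for the substantive interior case the identical perturbation --- slide $g$ along the supporting line of $e(g)$ toward the edge, i.e.\ in a direction interior to $\alpha(g)$ --- with the same two-pronged contradiction (a strict increase of $w(g)$ against the lexicographic maximality of $\Gamma^*$, or else $g$ landing on a vertex/extension and hence in $Q$). Your directional reformulation, the computation that $\alpha(g)$ occupies the directions $(\theta-\pi,\pi)$, and the check that a displacement direction making an angle strictly less than $\pi$ with both rays strictly increases the subtended angle, are all correct and simply make explicit what the paper leaves implicit; the level of rigor on visibility preservation of the whole arc is comparable to (in fact slightly above) the paper's own one-line treatment.

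There is, however, one genuine gap, and it is in the case you call easy, $g\in e(g)$. You write that the two pieces of $\partial$ abutting $g$ are visible from $g$ and ``by maximality they belong to $\partial(g)$.'' This presupposes $g\in\partial(g)$, which nothing in the definitions guarantees: $\partial(g)$ is a maximal connected visible portion \emph{assigned} to $g$, and for a guard lying on $\partial$ that arc need not be the component of $V(P,g)\cap\partial$ containing $g$ --- maximality only says $\partial(g)$ cannot be extended past its own endpoints, not that it absorbs every visible piece of $\partial$. If $\partial(g)$ were an arc avoiding $g$, your inference that $\psi$ and $\psi+\pi$ lie in $[0,\theta]$ collapses. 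The paper's argument for this case avoids the issue entirely: a point in the relative interior of an edge sees only points in the closed halfplane bounded by the supporting line of $e(g)$ on the polygon's side, so \emph{both rays of $w(g)$ lie on the same side of $e(g)$}; since $w(g)\ge\pi$ by Lemma~\ref{angle-lemma}, $w(g)$ must be the wedge containing the supporting line of $e(g)$, and $\alpha(g)$ then lies in the opposite closed halfplane, which yields both conclusions at once, with no assumption about where $\partial(g)$ sits relative to $g$. To repair your write-up, either substitute this halfplane argument or prove separately that under the standing assumptions a boundary guard always satisfies $g\in\partial(g)$; as it stands, that step is unjustified.
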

\begin{proof}
If $g\in e(g)$, then the statement holds because both rays of $w(g)$ are on the same side of $e(g)$. Assume that $g\notin e(g)$.  
It follows from Lemma~\ref{edge-lemma} that $e(g)\in w(g)$. To verify that $e(g)\notin \alpha(g)$ observe that otherwise, we could move $g$ along the extension of $e(g)$ inside $\alpha(g)$ to increase $w(g)$ without decreasing $\partial(g)$, a contradiction to our choice of $\Gamma^*$.  
\end{proof}

Let $g_1,g_2,\dots,g_m$ be the guards in $\Gamma^*$ in this order along $\partial$ (recall that this is the order of the unique points they cover). We modify $\Gamma^*$ and obtain an optimal solution $\Gamma$ such that:

\begin{itemize}
    \item for every $g_i\in \Gamma$ it holds that $w(g_i)\ge \pi$ and $e(g_i)\in w(g_i){\setminus}\alpha(g_i)$, and 
    \item for each pair of consecutive guards $g_i$ and $g_{i+1}$ the intersection of $\partial(g_i)$ and $\partial(g_{i+1})$ is exactly one point, except possibly for the pair $(p_m,p_1)$.
\end{itemize}

\begin{figure}[htb]
	\centering
\setlength{\tabcolsep}{0in}
	$\begin{tabular}{cc}
	\multicolumn{1}{m{.5\columnwidth}}{\centering\vspace{0pt}\includegraphics[width=.35\columnwidth]{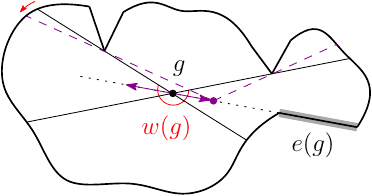}}
	&\multicolumn{1}{m{.5\columnwidth}}{\centering\vspace{0pt}\includegraphics[width=.35\columnwidth]{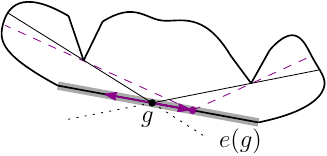}}
		\\
		(a)   &(b) 
	\end{tabular}$	
\caption{Illustration of the movement of $g$ along the extension of $e(g)$ or on $e(g)$.}
\label{movement-fig}
\end{figure}

By Lemma~\ref{edge-out-lemma} for every $g_i\in \Gamma^*$ we have $e(g_i)\in w(g_i){\setminus}\alpha(g_i)$. Assume that $g\notin e(g_i)$. Consider moving $g_i$ along the extension of $e(g_i)$; see Figure~\ref{movement-fig}(a). If we move $g_i$ towards $e(g_i)$ then one endpoint of $\partial(g_i)$ gets shorter, and if we move $g_i$ away from $e(g_i)$ the other endpoint of $\partial(g_i)$ gets shorter. This means that we can move $g_i$ along the extension to make either one of $\partial(g_i)\cap\partial(g_{i+1})$ and $\partial(g_i)\cap\partial(g_{i-1})$ a point. Along this movement, $g_i$ cannot reach $e(g_i)$ or the boundary of $P$ because otherwise, $g_i$ would be in $Q$, a contradiction. Also, $g_i$ cannot lie on a new edge-extension or vertex-extension as it would be in $Q$ again; in particular, this means that the boundary rays of $w(g_i)$ cannot hit a new reflex vertex.

When $g\in e(g_i)$ we can make either one of $\partial(g_i)\cap\partial(g_{i+1})$ and $\partial(g_i)\cap\partial(g_{i-1})$ a point by moving $g_i$ on $e(g_i)$ towards an endpoint as in Figure~\ref{movement-fig}(b). Again, along this movement, $g_i$ cannot reach an endpoint or lie on a vertex-extension or on an edge-extension.

After the movement, the angle at $w(g_i)$ remains at least $\pi$ because both bounding reflex vertices remain on the same side of the extension of $e(g_i)$. This also implies that $e(g_i)$ remains inside $w(g_i)$ and outside $\alpha(g_i)$.

We obtain $\Gamma$ as follows. We fix $g_1$ and move $g_2$ to make $\partial(g_1)\cap\partial(g_2)$ a point. Then fix $g_2$ and move $g_3$ to make $\partial(g_2)\cap \partial(g_3)$ a point. Then we perform this for all guards $g_3, g_4,\dots,g_{m-1}$ in this order. After this process all intersections $\partial(g_i) \cap \partial(g_{i+1})$ are points, except possibly for $\partial(g_m)\cap\partial(g_1)$. The resulting guard set is $\Gamma$.

For the rest of the proof, we consider two cases where $\OPT=2$ and $\OPT\ge 3$ separately and get a contradiction in each case.

\begin{figure}[!ht]
	\centering
\setlength{\tabcolsep}{0in}	\includegraphics[width=.58\columnwidth]{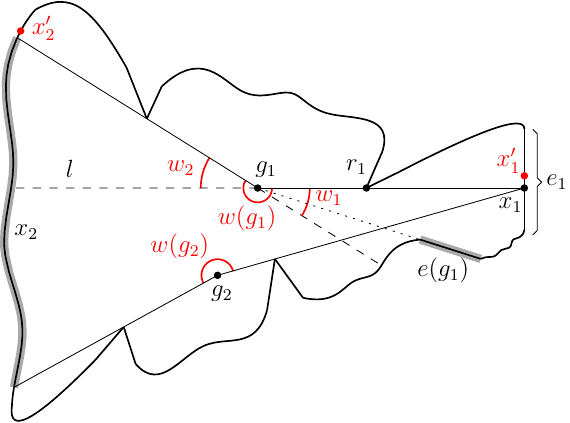}\\	\caption{Illustration of the proof for the case $\OPT=2$.}
\label{opt-2-revised-fig}
\end{figure}

\subparagraph*{Case 1 ($\OPT=2$)}
In this case $\partial(g_1)$ and $\partial(g_2)$ intersect at their both endpoints. One of the intersections is a point $x_1$ on some edge $e_1$, and the other is a portion $x_2$ of $\partial$; see Figure~\ref{opt-2-revised-fig}. Let $l$ be the line through $g_1x_1$, and let $r_1$ be the reflex vertex on $g_1x_1$. After a suitable rotation, assume that $l$ is horizontal and $r_1$ bounds $w(g_1)$ from above. Consider a point $x'_1$ on $e\setminus\partial(g_1)$ close to $x_1$---$x'_1$ is outside $w(g_1)$. Consider an analogous point $x'_2$ on $\partial\setminus \partial(g_1)$ close to $x_2$---$x'_2$ is outside $w(g_1)$. Both $x'_1$ and $x'_2$ are covered by $g_2$. Thus $g_2$ lies in $\alpha(g)$ as otherwise it would not cover $x'_1$ or $x'_2$. One ray of $w(g_2)$ intersects $\partial$ at $x_1$, and since $w(g_2)\ge \pi$ (by Lemma~\ref{angle-lemma}) its other ray intersects $\partial$ below $l$.
Let $w_1$ and $w_2$ be the two opposite wedges of $w(g_1)\setminus\alpha(g_1)$ where $w_1$ contains $x_1$. We have the following three cases depending on the position of $g_1$ and $e(g_1)$.

\begin{itemize}
    \item $g_1\notin e(g_1)$ and $e(g_1)\in w_1$. This case is depicted in Figure~\ref{opt-2-revised-fig}. By moving $g_1$ in $w_1$ towards $e(g_1)$, either $g_1$ lies on vertex-extension, or on a new edge-extension, or it reaches $e(g_1)$. In all cases we get a new optimal solution for which $g_1\in Q$, a contradiction. The movement cannot make the intersection $x_2$ empty because the intersection point of $l$ and $\partial$ remains in the coverage of $g_1$ (unless $g_1$ lies on a vertex-extension or a new edge extension during the movement). Therefore $\partial$ remains covered by $g_1$ and $g_2$ after the above movement, and thus the contradiction is valid.
    \item $g_1\notin e(g_1)$ and $e(g_1)\in w_2$. Again, by moving $g_1$ in $w_1$ away from $e(g_1)$, either $g_1$ lies on vertex-extension, or on a new edge-extension, or it reaches $\partial$. In all cases we get a new optimal solution for which $g_1\in Q$, a contradiction.
    \item $g_1\in e(g_1)$. By moving $g_1$ in $w_1$ on $e(g_1)$, either $g_1$ lies on vertex-extension or on a new edge-extension, or it reaches an endpoint of $e(g_1)$. Again, we get an optimal solution with $g_1\in Q$, a contradiction.
\end{itemize}
\begin{figure}[!ht]
	\centering
\setlength{\tabcolsep}{0in}	\includegraphics[width=.48\columnwidth]{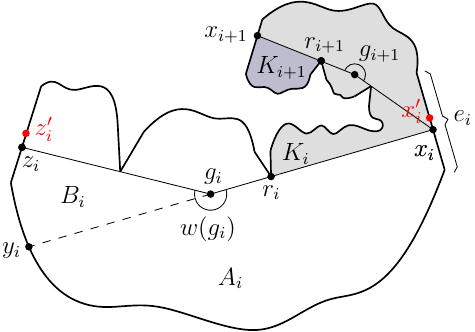}\\	\caption{Illustration of the proof for the case $\OPT\ge 3$.}
\label{opt-3-fig}
\end{figure}

\subparagraph*{Case 2 ($\OPT\ge 3$)} For each $i=1,\dots,m{-}1$ let $x_i$ be the point shared by $\partial(g_i)$ and $\partial(g_{i+1})$, $e_i$ be the edge of $P$ that contains $x_i$, and $r_i$ be the reflex vertex on $g_ix_i$ that bounds $w(g_i)$. See Figure~\ref{opt-3-fig}. Let $K_i$ be the portion of the polygon outside the coverage of $g_i$ that is cut off by $r_ix_i$. Let $y_i$ be the first point of $\partial$ hit by the ray $\overrightarrow{x_ig_i}$. It is implied from Lemma~\ref{angle-lemma} that $y_i\in \partial(g_i)$. The segment $x_iy_i$ partitions the polygon into three parts, one of which is $K_i$. Denote the part that lies completely in $w(g_i)$ by $A_i$ and the third part by $B_i$. 

Let $x'_i$ be a point on $e_i\cap K_i$ very close to $x_i$. The point $x'_i$ is not covered by $g_i$ but by $g_{i+1}$. The guard $g_{i+1}$ cannot be in $B_i$ because otherwise it cannot cover $x'_i$. It cannot be in $A_i$ either because otherwise $g_i$ and $g_{i+1}$ cover the entire polygon (recall that $w(g_{i+1})\ge \pi$), contradicting that optimal has size at least 3. Therefore $g_{i+1}$ is in $K_i$, and in particular $g_m$ is in $K_{m-1}$. The coverage $\partial(g_{i+1})$ of $g_{i+1}$ also lies in $K_i$ because its first endpoint is $x_i$ and its last endpoint cannot go beyond $r_i$. This implies that $K_{i+1}$ is a subset of $K_i$. Therefore $K_{m-1}$ is in $K_1$, and hence $g_m$ lies in $K_1$.

Let $z_1$ be the endpoint of $\partial(g_1)$ different from $x_1$. Notice that $z_1\in B_1$.
Let $z'_1$ be a point in $B_1$ and on $\partial\setminus \partial(g_1)$ very close to $z_1$. This point must be covered by $g_m$. However, this is impossible as $g_m$ is in $K_1$ and its coverage is blocked by $r_1$, a contradiction that the entire $\partial$ is covered by $\Gamma$.

\section{Conclusions}
We presented a polynomial-time exact algorithm as well as tight combinatorial bounds for the contiguous boundary guarding problem. One natural question is to improve the running time, say, by reducing the size of the set $Q$, or the size of the starting point set $S$, or the time of the greedy algorithm by adapting ideas from computing the kernel of a polygon \cite{Lee1979}.

Since polygon-guarding problems are typically NP-hard, it would be interesting to identify other instances that can be solved in polynomial time.




\subparagraph*{Acknowledgements}We thank the reviewers of SoCG 2025 who meticulously verified
our proofs and provided valuable feedback that clarifies some aspects of the proofs.
\bibliography{Boundary-Guarding.bib}
\end{document}